\title[Parameter-Adaptive AMPC]{Parameter-Adaptive Approximate MPC: Tuning Neural-Network Controllers without Retraining}
\newtheorem{myremark}{Remark}
\begin{document}

\newcommand{\thought}[1]{{\color[rgb]{0.2,0.39,0.66}(#1)}}
\newcommand{\todo}[1]{{\color[rgb]{1.0,0.0,0.0}(#1)}}
\newcommand{\hsh}[1]{{\color{green!50!black} Henrik: #1}}
\newcommand{\st}[1]{{\color{red!50!black} Sebastian: #1}}
\newcommand{\review}[1]{{\color{red}#1}}

\newcommand{\ulm}[1]{_{\mathrm{#1}}}
\newcommand\at[2]{\left.#1\right|_{#2}}

\newcommand{\sysparam}{{\theta_\mathrm{dyn}}}
\newcommand{\objparam}{{\theta_\mathrm{obj}}}
\newcommand{\param}{\theta}
\newcommand{\paramnom}{\param\ulm{nom}}

\newcommand{\sysstate}[1]{x(#1)}
\newcommand{\sysinput}[1]{u(#1)}

\newcommand{\sysstatempc}[2]{\sysstate{#1|#2}}
\newcommand{\sysinputmpc}[2]{\sysinput{#1|#2}}

\newcommand{\mpccontr}[2]{\pi\ulm{MPC}(#1, #2)}
\newcommand{\ampccontr}[1]{\pi\ulm{NN}(#1)}
\newcommand{\aampccontr}[2]{\pi\ulm{AMPC}(#1, #2)}
\newcommand{\errorampc}[1]{e_{\pi}(#1)}
\newcommand{\ampcgrad}[1]{\nabla_{\pi\ulm{NN}}(#1)}
\newcommand{\errorampcgrad}[1]{e_{\nabla}(#1)}

\newcommand{\parameterwithunits}[5]{[\SI{#1}{\kilogram}, \SI{#2}{\kilogram}, \SI{#3}{\newton\second\per\meter}, \SI{#4}{\newton\per\volt}, \SI{#5}{\newton\meter\second\per\radian}]^\top}
\newcommand{\defaultparamvalues}{\parameterwithunits{0.02}{0.506}{-3.96}{1.3}{0.0002}}
\newcommand{\mpitunedparamvalues}{\parameterwithunits{0.01}{0.5}{-1.42}{0.5}{0.01}}
\newcommand{\quansertunedparamvalues}{\parameterwithunits{-0.02}{0.36}{0.17}{-0.82}{0.002}}
\newcommand{\simuparamvalues}{\parameterwithunits{0.04}{1}{9}{1}{0.06}}

\newcommand{\pendang}{\alpha}
\newcommand{\pendpos}{y}

\newcommand{\stagecost}[3]{\ell_{#1}(\kappa, \sysstatempc{#2}{#3}, \sysinputmpc{#2}{#3})}
\newcommand{\finalcost}[2]{V\ulm{f}(\sysstatempc{#1}{#2}, \sysinputmpc{#1}{#2})}

\newcommand{\placestacked}[2]{\stackrel{\mathclap{\normalfont\tiny\mbox{\raggedright#1}}}{#2}}

\newtheorem{assumption}{Assumption}

\newcommand{\swname}[1]{\texttt{#1}}
\newcommand{\ie}{i\/.\/e\/.,\/~}
\newcommand{\eg}{e\/.\/g\/.,\/~}
\newcommand{\cf}{cf\/.\/~}

\newcommand{\fig}{Fig\/.\/~}
\newcommand{\defn}{Def\/.\/~}
\newcommand{\sect}{Sec\/.\/~}
\newcommand{\tabl}{Tab\/.\/~}
\newcommand{\algo}{Algorithm~}
\newcommand{\theo}{Theorem~}

\newcommand{\bnnl}{3 hidden layers}
\newcommand{\bnnn}{50 neurons}
\newcommand{\bnna}{tanh activations}

\newcommand{\capt}[1]{\mdseries{\emph{#1}}}

\newcommand{\videolink}{\url{https://youtu.be/o1RdiYUH9uY}}

\newcommand{\fakepar}[1]{\vspace{1mm}\noindent\textbf{#1.}}

\maketitle
\newcommand{\ag}[1]{{\color{orange!50!black} Alex: #1}}
\begin{abstract}%
  Model Predictive Control (MPC) is a method to control nonlinear systems with guaranteed stability and constraint satisfaction but suffers from high computation times. 
  Approximate MPC~(AMPC) with neural networks (NNs) has emerged to address this limitation, enabling deployment on resource-constrained embedded systems.
  However, when tuning AMPCs for real-world systems, large datasets need to be regenerated and the NN needs to be retrained at every tuning step.
  This work introduces a novel, parameter-adaptive AMPC architecture capable of online tuning without recomputing large datasets and retraining.
  By incorporating local sensitivities of nonlinear programs, the proposed method not only mimics optimal MPC inputs but also adjusts to known changes in physical parameters of the model using linear predictions while still guaranteeing stability.
  We showcase the effectiveness of parameter-adaptive AMPC by controlling the swing-ups of two different real cartpole systems with a severely resource-constrained microcontroller (MCU).
  We use the same NN across both system instances that have different parameters.
  This work not only represents the first experimental demonstration of AMPC for fast-moving systems on low-cost MCUs to the best of our knowledge, but also showcases generalization across system instances and variations through our parameter-adaptation method.
  Taken together, these contributions represent a marked step toward the practical application of AMPC in real-world systems.
\end{abstract}

\begin{keywords}%
  Nonlinear model predictive control, approximate MPC, sensitivity, machine learning
\end{keywords}

\section{Introduction}\label{sec:introduction}
Model predictive control (MPC) is an optimization-based control strategy for nonlinear systems with favorable properties like theoretically guaranteed stability and constraint satisfaction \citep{rawlings2017model}.
These properties come at the cost of long computation times for solving nonlinear programs, even on modern, high-performance processors.
In practice, this limits applications of~MPC, as powerful and expensive hardware is not always available.
Therefore, in recent years, approximate MPC (AMPC) with neural networks (NNs) has gained increasing interest~\citep{gonzalez2023neural}.
In AMPC, the behavior of the MPC is imitated by a NN that is often small enough to be evaluated within milliseconds on a microcontroller (MCU).
This unlocks the potential of~MPC on small, resource-constrained, embedded systems.
However, few publications have demonstrated~NN approximations of MPCs for fast-moving systems with update rates at the order of milliseconds on low-cost MCUs with real hardware experiments~(\cf\sect\ref{sec:intro-embeddedNN}).
A particular challenge when deploying AMPC in real-world applications is tuning the AMPC on the physical system to achieve satisfactory performance.
This usually involves changing parameters in the MPC (\eg system model, cost function) and retraining the NN approximation.
However, every tuning iteration potentially takes days to recompute the dataset.
This is because the dataset easily contains tens of thousands to millions of samples from the MPC where each datum corresponds to one solution of the MPC optimization.
As a consequence, the whole tuning process can take weeks.
While this might be cumbersome yet possible in a research context~\citep{nubert2020safe, carius2020mpc, abu2022deep, leonow2023embedded}, it limits practical applicability of AMPC in real products.
This is even more prohibitive when parameters vary slightly for each system instance and the AMPC needs to be retuned every time~\citep{adhau2019embedded}.

\fakepar{Contribution}
In this paper, we address the issue of cumbersome AMPC tuning by proposing a method that allows for tuning without regenerating a dataset and without retraining the NN.
We achieve this using local sensitivities of nonlinear programs with respect to its parameters.
The proposed AMPC architecture (see~\fig\ref{fig:method}) not only imitates the optimal inputs of the MPC, but also its sensitivities.
The architecture allows us to adjust the controller's output online with linear predictions.
Thus, the AMPC accounts for time-invariant, known changes in parameters of the dynamics model such as mass or friction, and also others like cost or constraint parameters of the~MPC.
In summary, we make the following contributions to the field of AMPC.
\begin{compactenum}
\item We propose an AMPC architecture which can adjust the output of the trained NN to time-invariant, known changes in dynamics and MPC design parameters using a linear predictor with approximations of the local sensitivities.
\item We derive conditions for the approximation error and maximum parameter change under which the controlled system is stable.
\item We implement the method on a small MCU (STM32G474 running at 170MHz with 96kb of SRAM) that costs only a few dollars and show that it can swing up and balance a real cartpole system.
To the best of our knowledge, this is the first real-world implementation of AMPC on a severely resource-constrained, general-purpose MCU controlling a fast and unstable nonlinear system.
\item In hardware experiments, we demonstrate that the parameter-adaptive AMPC generalizes to two cartpole system instances with different parameters without retraining the NN, whereas a naive nominal AMPC fails to generalize.\\
\end{compactenum}
A video of our experiments is available at~\videolink.

\begin{figure}
	\centering
	\begin{tikzpicture}[thick]
    \newcommand{\distsouround}{0.1cm}

    \pgfdeclarelayer{bg}    %
    \pgfsetlayers{bg,main}  %

    \begin{scope} [draw = black,
        fill = white, 
        dot/.style = {black, radius = .05}]

        \node (contr) at (0,0) [draw, fill, text width=6cm] {NN approx.~optimal inputs: $\ampccontr{x}$};
        \node [below=0.2cm of contr, draw, fill, text width=6cm] (sens) {NN approx.~sensitivities: $\ampcgrad{x}$}; 
        \node [above left=0.1cm of contr, anchor=south west, text width=8cm] (labelcontr) {\textbf{Parameter-adaptive AMPC:} $\aampccontr{x}{\param}$};

        \node [circle, right = 0.3cm of sens, draw, fill=white, scale=0.7] (senstimes) {$\times$}; 

        \node [below = 0.cm of sens] (params) {Change in parameter: $\param-\paramnom$};

        \draw[->] (sens) to (senstimes);
        \draw[->] (params) -| (senstimes);

        \coordinate (mid) at ($(contr)!0.5!(sens)$);
        
        \node [right=4cm of mid, circle,fill=white, draw, scale=0.7] (contradd) {$+$};

        \draw[->] (contr) -| (contradd);
        \draw[->] (senstimes) -| (contradd);

        \node [right=1.5cm of contradd, fill=white, align=center, minimum height = 1cm, text width=3cm, draw] (sys) {\textbf{System}\\$\sysstate{k+1}=f_\sysparam(\sysstate{k}, \sysinput{k})$};

        \draw[->] (contradd) -- node[above, pos=0.6] {$\sysinput{k}$} (sys);

        \coordinate[above left= \distsouround of labelcontr] (lefttopedge);
        \coordinate[right= \distsouround of contradd] (righttopedge);
        \coordinate[below = \distsouround of params] (bottomedge);

        \coordinate[below = 1.6cm of righttopedge] (contrlabel);

        \coordinate[right=0.3cm of sys] (a);
        \coordinate[below=0.1cm of bottomedge] (b);
        \coordinate[left=0.7cm of sens] (c);

        \draw[-] (sys) -- (a);
        \draw[-] (a) |- (b);
        \draw[-] (b) -| (c);

        \draw[->] (c) |- node [midway, left] {$\sysstate{k}$} (contr);
        \draw[->] (c) |- (sens);

        \begin{pgfonlayer}{bg} 
            \draw[draw=black,fill=lightgray!20,thick] (-3.5, 1.2) rectangle ++(8.3,-3);
        \end{pgfonlayer}

    \end{scope}
\end{tikzpicture}
	\vspace{-0.2cm}
	\caption{Parameter-adaptive AMPC. \capt{Approximate nominal MPC inputs are linearly adapted to true parameters $\param$ by approximate sensitivities. The parameters can include parameters of the dynamics model $\sysparam$ and other MPC parameters, like weights of the cost function.}}
	\label{fig:method}
\end{figure}
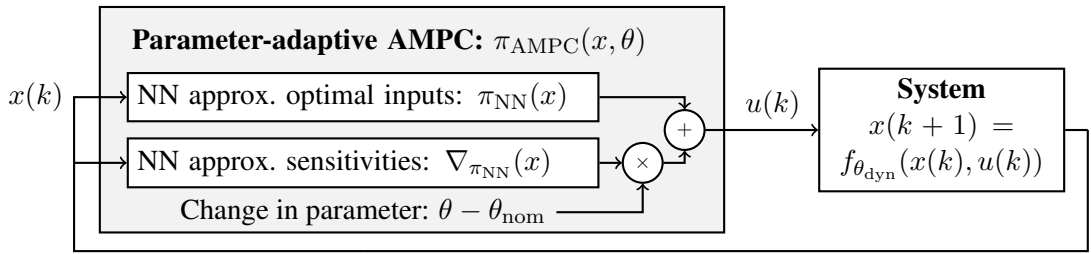

\section{Related Work}\label{sec:related-work}
This section summarizes essential related work in the context of this paper, divided into three parts.
The first part discusses important methods for AMPC and existing implementations of AMPC in embedded systems, highlighting the limited usage of small MCUs for controlling fast and nonlinear real systems.
The second part summarizes the few AMPC methods that could include dynamics model variations, highlighting the general lack of parameter-adaptive AMPC methods.
The final part reviews prior works on MPC that leverage sensitivities, but for different purposes than dynamics model parameter adaptation as herein. 

\subsection{Approximate model-predictive control}\label{sec:intro-embeddedNN}
In the past decades, various methods have been proposed to avoid online optimization in embedded~MPC.
In simple lookup methods, an offline-computed table of optimal inputs is used for online interpolation.
Even using the nearest neighbor can perform comparable to imitation learning methods~\citep{florence2022implicit} and can have theoretical guarantees for robust MPC~\citep{bayer2016tube}.
Through multi-parametric programming, explicit MPC solutions are attainable exactly for (small) linear~\citep{bemporad2002explicit, alessio2009survey} and approximately for some nonlinear systems~\citep{johansen2004approximate, bemporad2006algorithm}, for which convenient software packages for stand-alone C-code export for MCUs exist~\citep{kvasnica2015design}.
However, solving multi-parametric programs is complex for general nonlinear and even medium-sized linear systems and lookup tables grow exponentially with the number of optimization variables, making them impractical for MCUs with limited memory.

Possible alternatives that efficiently scale with system dimensions include approximations with NNs, a concept explored for decades~\citep{parisini1998nonlinear, aakesson2006neural}. 
More recently, research focused on theoretical guarantees, imitation learning methods, and validation of NN approximations, \cf a recent overview by~\citet{gonzalez2023neural}.
However, only a handful of publications have successfully implemented NN approximations of MPCs in practical hardware experiments, such as robot arm tracking~\citep{nubert2020safe}, electric motor control~\citep{abu2022deep}, or quadruped robot control~\citep{carius2020mpc}; notably, all using laptop-grade CPUs for inference, which would, in principle, also be suitable for real-time optimization.

On the other side, works focussing on the implementation of AMPC on small embedded devices validated their results on simulated plants \citep{lucia2020deep, wang2021model, chan2021deep, adhau2019embedded, karg2019learning}.
To the best of our knowledge, \citet{leonow2023embedded} and \citet{xiang2024use} present the only applications of AMPC with NNs on MCUs controlling actual physical systems, albeit with slow dynamics and a control frequency of \SI{2}{\hertz} in the first and a linear system in the latter case.
In this paper, we implement and experimentally test AMPC (evaluated in less than~\SI{2}{\milli\second}) on MCUs controlling a fast physical system (control frequency of \SI{20}{\hertz}).

\subsection{AMPC with parameter variation}
When bringing control to real physical systems, controllers usually need to be fine-tuned on the real plant to achieve optimal performance.
In classic MPC, this can be done automatically~\citep{forgione2020efficient,paulson2023tutorial} in closed-loop experiments.
For AMPC, however, such tuning involves recomputing a large dataset and retraining an NN, a limitation specifically mentioned by~\citet{adhau2019embedded}.
While AMPC can imitate a robust MPC policy \citep{nubert2020safe} that could also account for dynamics model parameter variations~\citep{kohler2020computationally}, such robustification schemes are generally involved for nonlinear systems and introduce conservatism.
Other AMPC methods that were not originally intended for tuning could be used to adapt to dynamics parameter changes, \eg warm-starting online optimization~\citep{klauvco2019machine, chen2022large} or retraining the NN in a reinforcement-learning fashion, similar to~\cite{bogdanovic2022model}.
However, implementing them on small MCUs is impractical and online optimization with NN warm-starting can be slow \citep{vaupel2020accelerating}.
We overcome this issue by using sensitivities of nonlinear programs to approximately correct the AMPC with a linear prediction.
The presented method is fast to compute on MCUs and intuitive to tune due to physical interpretation of dynamics model parameters.

\subsection{Sensivitiy-based nonlinear MPC}\label{sec:intro-sensNLP}
The method proposed in this paper uses the sensitivities of the nonlinear MPC problem to predict online how the optimal inputs change in response to parameter variations.
In this section, we briefly introduce sensitivities of nonlinear programs (NLPs) first and then summarize some related works that leverage sensitivities for fast adaptations of MPCs to changes in initial state.

A nonlinear MPC problem is a parametric NLP with parameters~$p$.
In sensitivity MPC, these parameters are typically the initial state~\citep{buskens2001sensitivity} but could also comprise other dynamics model parameters as in this paper, or even cost or constraint parameters.
A parametric NLP in standard form with cost function~$V$, constraints~$g$, and optimization variables~$u$ is
\begin{align}\label{eqn:nlp-standard}
	\begin{split}
	u^*(p) = \arg\min_{u}\; V(u,p) \quad \text{s.t.} \quad g(u, p)\leq0.
	\end{split}
\end{align}

We call the constraints for which $g(u^*, p) = 0$ active constraints.
The classic results of~\cite{fiacco1976sensitivity} allow applying the implicit function theorem to the gradient of the KKT conditions, which can be used to get first order predictions of the optional solution
\begin{equation}
	u^*(p_0 + \Delta p) = u^*(p_0) + \at{\tfrac{\partial}{\partial p}u^*(p)}{p = p_0}\Delta p + \mathcal{O}(\|\Delta p\|^2).
\end{equation}
This holds as long as the set of active constraints does not change, in which case a quadratic program would need to be solved~\citep{kadam2004sensitivity}.
The gradient~$\tfrac{\partial u^*}{\partial p}$ can be used as a linear predictor of the optimal solution to (\ref{eqn:nlp-standard}) for parameter variations.
Sensitivities are available in some nonlinear optimization software packages used for MPC,~\eg Pyomo~\citep{bynum2021pyomo}, CasADi~\citep{andersson2018sensitivity}, or acados~\citep{verschueren2022acados}.

The sensitivity with respect to the initial state has been used extensively in MPC.
Classic sensitivity MPC uses it to perform a fast online correction for a precomputed input sequence~\citep{buskens2001sensitivity,diehl2002real}; also when active constraints change~\citep{kadam2004sensitivity}.
In the advanced-step MPC~\citep{zavala2009advancedstepmpc,yang2013advancedmultistepmpc,jaschke2014fast}, a nonlinear MPC is solved online for a predicted future initial state.
Once numerical optimization concludes, the solution is adjusted to the measured initial state using sensitivities.

\cite{krishnamoorthy2021sensitivity,krishnamoorthy2023improved} uses sensitivities with respect to the initial state in AMPC for data augmentation. 
Linear (offline) predictions around sample points extend the dataset to enhance the accuracy of the NN at a marginal increase in computation times.
Similarly,~\cite{lueken2023sobolev} use the sensitivities directly in the training with a loss on the gradient to improve prediction accuracies.

While these works indicate that sensitivities can be used for fast online adjustment to changes in the initial state, none of them uses sensitivities to account for changes in dynamics and other MPC parameters as we propose herein.

\section{Parameter-adaptive AMPC}\label{sec:methods}
This section first introduces the nonlinear MPC formulation that is approximated.
We then describe the proposed parameter-adaptive AMPC and provide theoretical results on its stability.

We consider general, nonlinear, discrete time dynamical systems
\begin{equation}\label{eqn:nominal_sys}
	\sysstate{k+1} = f_{\sysparam}(\sysstate{k}, \sysinput{k}), ~~\sysstate{0} = x_0,
\end{equation}
where
$k\in \mathbb{N}$ is the discrete time,
$x\in\mathbb{R}^n$ the state,
$x_0\in\mathbb{R}^n$ the initial state,
$u\in\mathbb{R}^m$ the input,
and $f_\sysparam: \mathbb{R}^n\times\mathbb{R}^m \rightarrow \mathbb{R}^n$ the continuous dynamics function with time-invariant dynamics model parameters $\sysparam\in\Theta_\mathrm{dyn}\subseteq\mathbb{R}^{q_\text{dyn}}$,
where $\Theta_\mathrm{dyn}$ is the set of possible parameter values.
We consider the dynamics parameters to be time invariant, representing, \eg different instances of the same system class.
We consider a nominal system~(\ref{eqn:nominal_sys}) without external disturbances, as is often done in practice, \eg for approximate MPC by~\cite{carius2020mpc}.
However, the same parameter-adaption method would also extend to robust AMPC with external disturbances, \eg by~\cite{nubert2020safe}.

\subsection{Input robust model predictive control}
A standard, nonlinear MPC formulation~\citep{rawlings2017model} for controlling~(\ref{eqn:nominal_sys}) is
\begin{align}
	\begin{split}\label{eqn:implicit-mpc}
	&u_\param^* = \arg \min_{u}{\textstyle\sum_{\kappa=0}^{N}}\stagecost{\param}{\kappa}{k}\\
	\text{s.t.~~} & \sysstatempc{0}{k} = \sysstate{k}, \quad \sysstatempc{\kappa+1}{k} = f_\sysparam(\sysstatempc{\kappa}{k}, \sysinputmpc{\kappa}{k}),\\
	& \sysstatempc{\kappa}{k} \in \mathcal{X}_{\param}(\kappa), \quad \sysinputmpc{\kappa}{k} \in \mathcal{U}_{\param}(\kappa) \quad \forall\kappa=0\dots N,
	\end{split}
\end{align}
where $\ell_{\param}: \mathbb{N}\times\mathbb{R}^n\times\mathbb{R}^m \rightarrow \mathbb{R}$ is a cost function, and $\mathcal{X}_\param(\kappa)\subseteq\mathbb{R}^n$ and $\mathcal{U}_\param(\kappa)\subseteq\mathbb{R}^m$ are the state and input constraints, respectively. 
The parameter vector $\param\in\Theta\subseteq\mathbb{R}^q$ contains the parameters of the dynamical system $\sysparam$ and possibly other parameters, like weights in the objective function.
The set $\Theta$ is the set of all possible parameter values.
The first element~$u_\param^*(0|k)$ of the optimal input trajectory $u_\param^*$ is then applied to the plant.
We call this controller $\mpccontr{\sysstate{k}}{\param} := u_\param^*(0|k)$.
In practice, there are software packages to formulate and pass the optimization problem~(\ref{eqn:implicit-mpc}) to NLP solvers to compute the optimal control trajectory $u^*$.
Some software packages (\cf\sect\ref{sec:intro-sensNLP}) also provide sensitivities for the optimal controls, \ie gradients~$\at{\tfrac{\partial}{\partial \param}\mpccontr{x}{\param}}{\paramnom}$, which we shall use in the following.
We further assume that the MPC~(\ref{eqn:implicit-mpc}) stabilizes
the plant under input disturbances $d(k)\in\mathbb{R}^m$, \ie $\sysinput{k} = \mpccontr{\sysstate{k}}{\param} + d(k)$.
The input disturbances model approximation errors of the AMPC, not external disturbances or noise (note that~(\ref{eqn:nominal_sys}) is the nominal system), as similarly proposed by~\cite{hertneck2018ampc}.
The general formulation of an MPC~(\ref{eqn:implicit-mpc}) includes MPCs with or without terminal state costs and constraints.
We do not further specify the structure of the MPC, a potential robustification scheme, cost function, and constraints.
These depend on the application and their choice is orthogonal to the proposed approximation method.
For the following derivations, we require a robust MPC:
\begin{assumption}
	\label{as:robuststability}
	There exists a stability notion, a vector norm, a compatible matrix norm $||\cdot||$, and a maximum allowable input disturbance $\eta > 0$ such that it holds for all $\param\in\Theta$: if for all $k \geq 0$~$||d(k)||\leq\eta$, the controlled system is stable under this notion.
\end{assumption}
An overview of robust MPC methods that can satisfy this assumption is presented by~\cite{houska2019robust}.
In the context of AMPC,~\cite{hertneck2018ampc} and~\cite{nubert2020safe} use an input-robust MPC (\cf\cite{kohler2020computationally} for a more general version) that can, in principle, satisfy this assumption.

\subsection{Approximate model predictive control with parameter changes}
A general idea of AMPC is to approximate~$\pi\ulm{MPC}$ via supervised learning from a large dataset of samples~$(x_i, \pi\ulm{MPC}(x_i, \param))$.
As the MPC depends on~$\param$, we consider a nominal~$\paramnom\in\Theta$ for training. 
For example,~$\paramnom$ could be chosen as the center of the relevant set~$\Theta$.
The learned mapping for nominal dynamics parameters is called $\pi\ulm{NN}: \mathbb{R}^n\rightarrow\mathbb{R}^m$.
The approximation error of~$\pi\ulm{NN}$ is $\errorampc{x} := \ampccontr{x}-\mpccontr{x}{\paramnom}$.
In addition to approximating the optimal controls by the MPC, as is standard in AMPC, we propose to train an additional NN offline via supervised learning to approximate the sensitivities from samples~$(x_i, \at{\tfrac{\partial}{\partial \param}\mpccontr{x_i}{\param}}{\paramnom})$.
These sensitivities are available from some NLP solvers (\cf\sect\ref{sec:intro-sensNLP}).
We call the NN that approximates sensitivities~$\ampcgrad{x}:\mathbb{R}^{n}\rightarrow\mathbb{R}^{m\times p}$.
The approximation error of this is $\errorampcgrad{x} := \ampcgrad{x} - \at{\tfrac{\partial}{\partial \param}\mpccontr{x}{\param}}{\paramnom}$.
With the two NN approximations,~$\ampccontr{x}$ and~$\ampcgrad{x}$, we propose the parameter-adaptive AMPC (see~\fig\ref{fig:method}):
\begin{equation}
	\label{eq:aampc}
	\aampccontr{x}{\param} = \ampccontr{x} + \ampcgrad{x}(\param - \paramnom).
\end{equation}
This parameter-adaptive AMPC (\ref{eq:aampc}) uses the approximated optimal controls for the nominal parameters and linearly predicts how these controls change for the real parameters~$\param$.
The linear predictor is given by the approximate sensitivities~$\ampcgrad{x}$.

In the following, we will derive stability properties of the parameter-adaptive AMPC~(\ref{eq:aampc}).
Based on the classic result on NLP sensitivities by~\cite{fiacco1976sensitivity}, there exists a constant $L$ such that
\begin{equation}\label{eqn:predictor-lip}
	||\mpccontr{x}{\paramnom} + \at{\tfrac{\partial}{\partial \param}\mpccontr{x}{\param}}{\paramnom}(\param - \paramnom) - \mpccontr{x}{\param}|| \leq L||\param - \paramnom||,
\end{equation}
if the set of active constraints does not change \citep{zavala2009advancedstepmpc}.
We make this assumption:
\begin{assumption}\label{ass:active-set}
	The set of active constraints does not change by the linear predictor. That is, the solution to~(\ref{eqn:implicit-mpc}) has the same active constraints for~$\theta$ and~$\paramnom$.
\end{assumption}
Using~(\ref{eqn:predictor-lip}), we can extend the stability of AMPC as per Ass.~\ref{as:robuststability} to the parameter-adaptive AMPC~(\ref{eq:aampc}):
\begin{theorem}
	\label{th:stability}
	Let Assumptions \ref{as:robuststability} and \ref{ass:active-set} hold.
	If there exists an $\epsilon>0$ such that $\errorampc{x} + \epsilon < \eta$, then the dynamical system $f_\sysparam$ controlled by (\ref{eq:aampc}) is stable for all $\param\in\tilde{\Theta}$ with $\tilde{\Theta}:=\{\theta\in\Theta|(\sup_{x\in\mathbb{R}^n}||\errorampcgrad{x}|| + L)||\param - \paramnom||<\epsilon\}$.
\end{theorem}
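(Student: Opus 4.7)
The plan is to interpret the parameter-adaptive controller~(\ref{eq:aampc}) as the exact MPC for the real parameter~$\param$ plus an effective input disturbance, and then bound that disturbance so that Assumption~\ref{as:robuststability} applies. Concretely, for the closed-loop system driven by $\aampccontr{x}{\param}$, define
\begin{equation*}
    d(k) := \aampccontr{\sysstate{k}}{\param} - \mpccontr{\sysstate{k}}{\param},
\end{equation*}
so that $\sysinput{k} = \mpccontr{\sysstate{k}}{\param} + d(k)$. If I can show $\|d(k)\| \leq \eta$ uniformly in $k$ and for every $\param \in \tilde{\Theta}$, then Assumption~\ref{as:robuststability} immediately yields stability of the closed loop under $f_\sysparam$.

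The bulk of the proof is therefore the bound on $\|d(k)\|$. I would insert the nominal MPC input $\mpccontr{x}{\paramnom}$ and the true linear predictor $\mpccontr{x}{\paramnom} + \at{\tfrac{\partial}{\partial \param}\mpccontr{x}{\param}}{\paramnom}(\param-\paramnom)$ via add-and-subtract, then split by the triangle inequality into three pieces:
\begin{equation*}
    \|d(k)\| \leq \underbrace{\|\ampccontr{x} - \mpccontr{x}{\paramnom}\|}_{\|\errorampc{x}\|} + \underbrace{\bigl\|\mpccontr{x}{\paramnom} + \at{\tfrac{\partial}{\partial \param}\mpccontr{x}{\param}}{\paramnom}(\param-\paramnom) - \mpccontr{x}{\param}\bigr\|}_{\text{sensitivity truncation}} + \underbrace{\|\errorampcgrad{x}(\param-\paramnom)\|}_{\text{sensitivity NN error}}.
\end{equation*}
The first term is at most $\|\errorampc{x}\|$ by definition. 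For the second, Assumption~\ref{ass:active-set} is exactly what allows me to invoke~(\ref{eqn:predictor-lip}) with the same Lipschitz constant $L$, giving a bound of $L\|\param-\paramnom\|$. The third term is bounded by $\sup_x\|\errorampcgrad{x}\|\cdot\|\param-\paramnom\|$ using the compatibility of the matrix and vector norms from Assumption~\ref{as:robuststability}.

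Collecting the three contributions gives
\begin{equation*}
    \|d(k)\| \leq \|\errorampc{x}\| + \bigl(\sup_{x\in\mathbb{R}^n}\|\errorampcgrad{x}\| + L\bigr)\|\param-\paramnom\|.
\end{equation*}
By definition of $\tilde{\Theta}$, the second summand is strictly less than $\epsilon$ for any $\param\in\tilde{\Theta}$, and the hypothesis $\|\errorampc{x}\| + \epsilon < \eta$ then forces $\|d(k)\| < \eta$. Assumption~\ref{as:robuststability} now concludes stability of the closed loop.

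The main obstacle I anticipate is conceptual rather than computational: checking that Assumption~\ref{ass:active-set} legitimately licenses~(\ref{eqn:predictor-lip}) pointwise in $x$ and uniformly in $\param\in\tilde{\Theta}$, and that the matrix norm used on $\errorampcgrad{x}$ is the compatible one from Assumption~\ref{as:robuststability} so that the product bound $\|\errorampcgrad{x}(\param-\paramnom)\| \leq \|\errorampcgrad{x}\|\,\|\param-\paramnom\|$ is valid. Once those two hygiene points are stated cleanly, the chain of triangle inequalities and the definition of $\tilde{\Theta}$ line up directly with the hypothesis and the conclusion follows in one line.
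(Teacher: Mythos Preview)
Your proposal is correct and follows essentially the same approach as the paper's own proof: both define the mismatch $\aampccontr{x}{\param}-\mpccontr{x}{\param}$, split it via the triangle inequality into the three contributions $\|\errorampc{x}\|$, the sensitivity-truncation term bounded by~(\ref{eqn:predictor-lip}), and the sensitivity-approximation term $\|\errorampcgrad{x}(\param-\paramnom)\|$, and then invoke the definition of $\tilde{\Theta}$ together with Assumption~\ref{as:robuststability}. Your explicit framing via $d(k)$ and the hygiene remarks on norm compatibility are a slight elaboration, but the argument is otherwise line-for-line the paper's.
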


\begin{proof}
	Because the set of active constraints does not change, we have
	\begin{align}
		||\pi&\ulm{AMPC}(x,\param) - \mpccontr{x}{\param}|| = ||\ampccontr{x} + \ampcgrad{x}(\param - \paramnom)-\mpccontr{x}{\param}||\nonumber\\
		=&||\errorampc{x} + \mpccontr{x}{\paramnom} + \big[\at{\tfrac{\partial}{\partial \param}\mpccontr{x}{\param}}{\paramnom}+\errorampcgrad{x}\big](\param - \paramnom) - \mpccontr{x}{\param}||\nonumber \\
		\leq&||\errorampc{x}|| +||\errorampcgrad{x}(\param - \paramnom)||\nonumber \\&
		+||\mpccontr{x}{\paramnom} + \at{\tfrac{\partial}{\partial \param}\mpccontr{x}{\param}}{\paramnom}(\param - \paramnom) - \mpccontr{x}{\param}||\nonumber\\
		\leq&||\errorampc{x}|| + ||\errorampcgrad{x}(\param - \paramnom) || + L||\param - \paramnom||\nonumber\\
		\leq&\eta-\epsilon + (||\errorampcgrad{x}|| + L)||\param - \paramnom||.\nonumber
	\end{align}
	Hence, using Assumption \ref{as:robuststability}, the theorem directly follows.
\end{proof}%
Theorem~\ref{th:stability} provides a sufficient condition that assures that the AMPC for the nominal parameters~$\paramnom$ generalizes locally to a stabilizing controller for~$\param$ in the set~$\tilde{\Theta}$ using the gradient of the nominal solution.
The size of the environment depends on the accuracy of the learned MPC inputs and gradients. 
In general, the higher the accuracy, the larger the environment.
For the gradient, this dependency follows directly from the formula of $\tilde{\Theta}$.
For a more accurate $\ampccontr{x}$, $\epsilon$ can be chosen higher, and thus the size of $\tilde{\Theta}$ increases.
In practice, global approximation error bounds
  could be validated, e.g., by statistical methods~(\cite{hertneck2018ampc,nubert2020safe}).
\begin{myremark}
	In case the set of active constraints changes, accurate sensitivities require solving a quadratic program \citep{kadam2004sensitivity}, or can be avoided by heuristics like soft constraints \citep{yang2013advancedmultistepmpc}.
	As we will empirically show in \sect\ref{sec:results}, neglecting active set changes can still work well in practice for small changes in the parameters.
\end{myremark}

\section{Hardware Implementation: Cartpole Pendulum}\label{sec:implementation}
This section describes a practical implementation example of parameter-adaptive AMPC with a cartpole pendulum swing-up\footnote{Code available at:~\url{https://github.com/hshose/Adaptive-AMPC-Cartpole}}.
First, we systematically test the parameter-adaptive AMPC for parameter variations in simulation.
Second, we present hardware experiments to underline our method's practical relevance.
We use two different instances of a cartpole pendulum (\fig\ref{fig:cartpole}), one produced by Quanser Inc.~\citep{apkarian2012quanser}, one self-made~\citep{mager2022scaling}.
The parameter-adaptive AMPC transfers from nominal parameters to both real system instances without retraining.
In both hardware systems, just using approximate controls~$\pi\ulm{NN}$ for nominal parameters fails.

The inverted pendulum is a classic benchmark control system~\citep{boubaker2013inverted}.
The standard cartpole model~(\cf~\fig\ref{fig:cartpole}) has the state~$x=[\pendpos,\dot{\pendpos},\pendang,\dot{\pendang}]^\top$, which contains the pendulum angle, cart positions, and corresponding derivatives.
The input~$u$ is the voltage applied to the cart's motor.
Neglecting the dynamics of the motor current\footnote{This is reasonable for small motor inductances in the hardware~\citep{apkarian2012quanser}.}, the force moving the cart is~$F=C_1\dot{\pendpos}+C_2u$, where the constants~$C_1$ and~$C_2$ capture all velocity-dependent friction as well as motor and transmission constants, respectively.
The inertia~$J = J_\text{rod}+J_{m_\text{add}}$ and the mass~$m=m_\text{rod} + m_\text{add}$ of the pendulum are given about its center of mass at a distance~$l$ from the point of rotation.
The mass of the cart~$M=M_\text{cart}+J_\text{mot}$ includes the reflected inertia of the motor and transmission.
Using Euler-Lagrange's equations, we get the equations of motion
\begin{align}\label{eqn:eom-implicit}
    \begin{split}
m l \cos(\pendang) \ddot{\pendpos} + (ml^2+J)\ddot{\pendang} - mgl\sin(\pendang) &= - C_3\dot{\pendang} 
\\
(M+m)\ddot{\pendpos} + m l \cos(\pendang)\ddot{\pendang} - ml\sin(\pendang)\dot{\pendang}^2 &= F
,
\end{split}
\end{align}
where constant~$C_3$ is the velocity-proportional friction coefficient of the pendulum rotational axis.

\begin{figure}[hbt]
    \center
    \begin{tikzpicture} [thick]

    \tikzstyle{every node}=[font=\small]
    
    \newcommand{\angpendfig}{20}
    
    \draw [black] (-2.1,0) -- (2.1,0);
    \fill [black!20] (-2.1,0) rectangle (2.1,-.4);
    
    \begin{scope} [draw = black,
        fill = blue!20, 
        dot/.style = {black, radius = .05}]
    
    \filldraw [rotate around = {-\angpendfig:(0,1)}] (.1,1) --
        +(0,2) arc (0:180:.09)
        coordinate [pos = .5] (T) -- (-.1,1);

    \node at (1.3, 1.8) [text width=2.5cm, align=center] {$l_\text{rod}$, \\$m_\text{rod}$, $J_\text{rod}$};
    
    \filldraw [rotate around = {-\angpendfig:(0,1)}](0,3.) circle (0.3) +(0.8,0.3) node {$m_\text{add}$};
    
    \filldraw (-.7,.15) circle (.15);
    \fill [dot] (-.7,.15) circle;
    \filldraw (.7,.15) circle (.15);
    \fill [dot] (.7,.15) circle;

    \filldraw (-1,0.7) -- coordinate [pos = .5] (F)
    (-1,.3) --
    (1,.3) -- (1,0.7) 
    coordinate (X) -- (.2,0.7)
    -- (0,1) -- (-.2,0.7) -- (-1.014,0.7);
    \node at (0.8,1.0) {$M_\text{cart}$};

    \filldraw   (0,1) circle (.15);
    \fill [dot] (0,1) circle;
    \end{scope}
    
    \begin{scope} [thin, black]
        \draw (T) -- (0,1.) coordinate (P);
        \draw [dashed] (P) + (0,-1.5) -- +(0,2);
        \draw [->] (P) + (0.1, -1.2) -- +(0.8, -1.2) node [black, right] {$\pendpos$, $F$};
        \draw [->] (P) + (0,1.3) arc (90:90-\angpendfig:1.3) node [black, midway, above] {$\pendang$};
        \draw [dashed] (-2.1,-0.4) -- (-2.1,1) node [black, near end, right] {$\pendpos_\text{min}$};
        \draw [dashed] ( 2.1,-0.4) -- ( 2.1,1) node [black, near end, left] {$\pendpos_\text{max}$};
    \end{scope}
    
    \end{tikzpicture}
    \hspace{-0.6cm}
    \begin{tikzpicture}
        \node (firstpic) at (0,0) {\includegraphics[clip, trim=0cm 0cm 0cm 0cm, height=4.4cm]{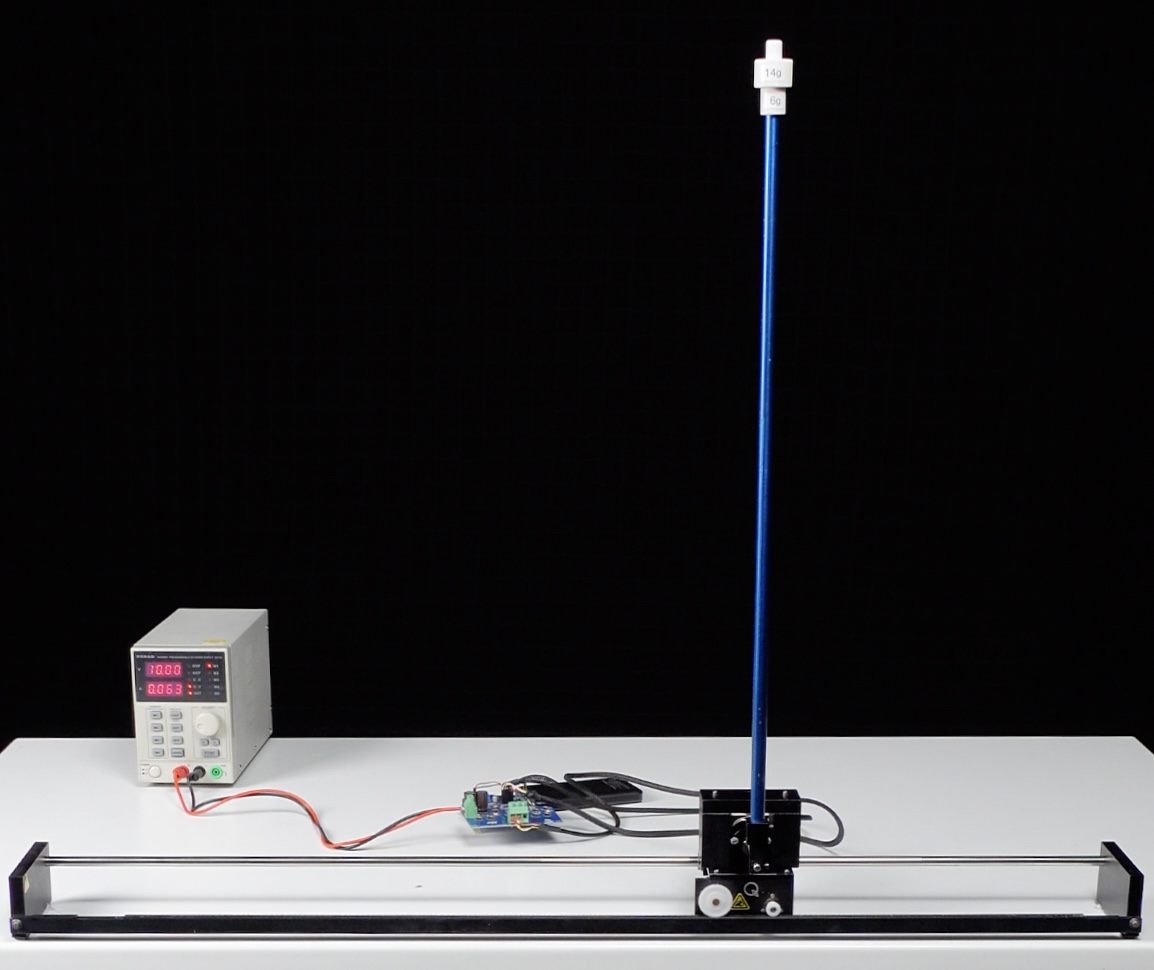}};%
        \node[text opacity=1, anchor=north west, text=white, text centered, text width=1.5cm, inner sep=0.4cm, font=\footnotesize] at (firstpic.north west) {Quanser system};%
        \node (secondpic) [right=0cm of firstpic] {\includegraphics[clip, trim=0cm 0cm 0cm 0cm, height=4.4cm]{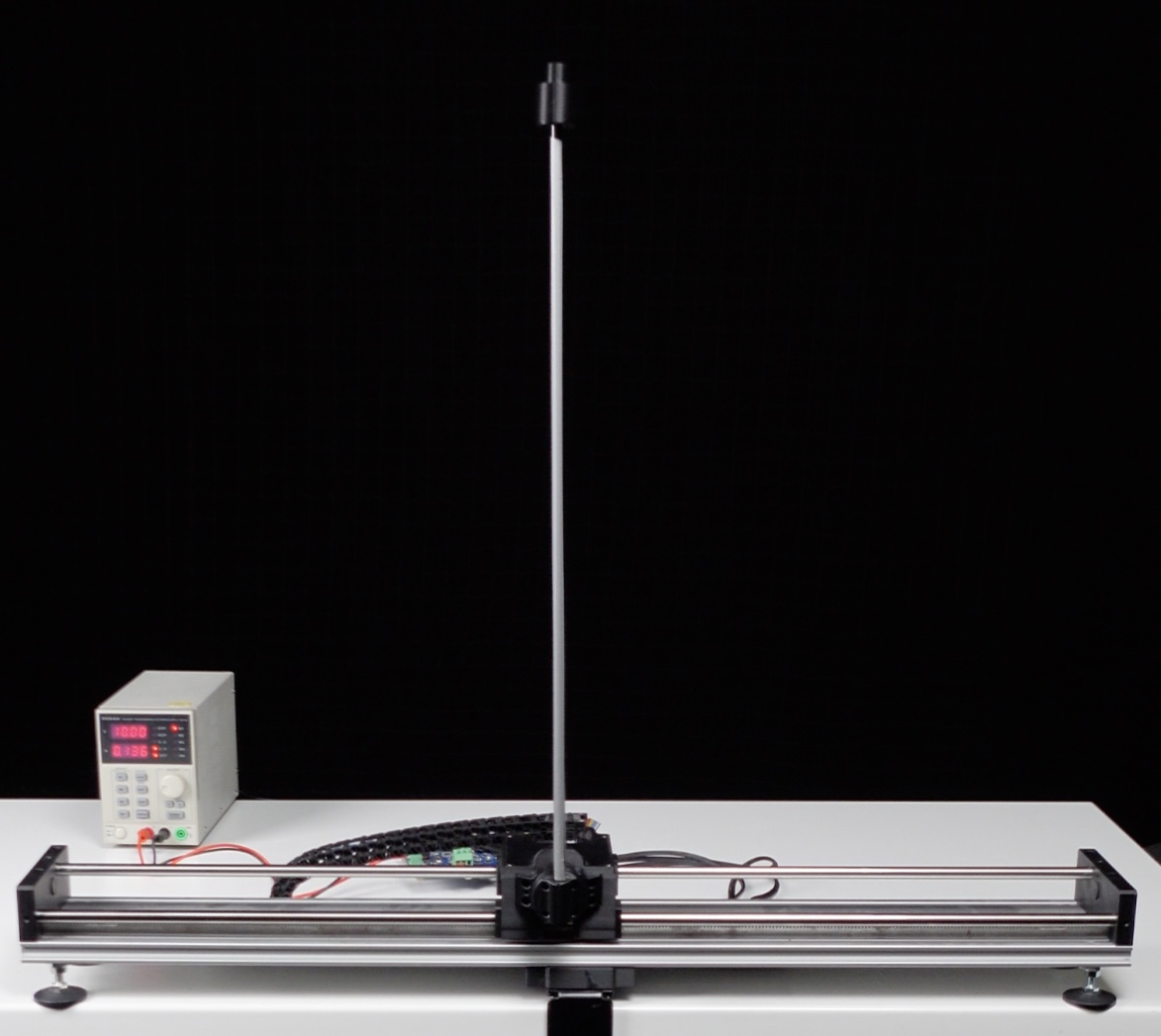}};%
        \node[text opacity=1, anchor=north west, text centered, text=white, text width=1.5cm, inner sep=0.4cm, font=\footnotesize] at (secondpic.north west) {self-made\\pendulum};%
    \end{tikzpicture}  %
    \vspace{-0.2cm}
    \caption{The cartpole inverted pendulum system (left) and the hardware pendulums used. \capt{The Quanser pendulum (center) and the self-made pendulum (right) have significantly different parameters~$\param$ from each other and from $\paramnom$. A video of our experiments is available at~\videolink.}}
    \label{fig:cartpole}
\end{figure}

\subsection{Implementation of the nonlinear MPC with sensitivities}\label{sec:impl-nlmpc-with-sens}
We implement an MPC~(\ref{eqn:implicit-mpc}) with a horizon length of~$N=25$ using collocation and inputs that are step-wise constant for \SI{160}{\milli\second}.
We impose constraints on the cart position~$\pendpos\in[\pendpos_\text{min}, \pendpos_\text{max}]$ and the control~$u\in[u_\text{min},u_\text{max}]$ and choose a stage cost~$\stagecost{\param}{\kappa}{k}=E_\text{kin}-E_\text{pot}+p^2+0.01u^2$ with potential energy~$E_\text{pot}$ and kinetic energy~$E_\text{kin}$ of the system, which is known to work well for pendulum swing-up tasks \citep{magni2002global,boubaker2013inverted}.
We impose a terminal constraint~$\sysstatempc{N}{k}\in\mathcal{X}(N)$ to avoid set-valued MPC solutions (ambiguities) in our dataset that would cause a single feedforward NN to average through different solutions~\citep{carius2020mpc,li2022using}.
To avoid ambiguities from local minima, we reinitialize the NLP solver multiple times with random states and only consider the minimal cost value from all successful runs.
We compute sensitivities for the parameters~$\param=[m_\text{add}, M, C_1, C_2, C_3]^\top$ with nominal values~$\paramnom=\defaultparamvalues$ from direct measurements or manufacturer datasheets. 
We found that the true parameters~$\param$ identified on trajectory data by minimizing a squared loss for the prediction error along an MPC horizon to have significant deviations among several instances of the system,~\ie as depicted in~\fig\ref{fig:cartpole} for the Quanser system $\paramnom+\quansertunedparamvalues$ and for the self-made pendulum $\paramnom+\mpitunedparamvalues$.

The MPC optimization problem is formulated using Pyomo~\citep{bynum2021pyomo} and solved using IPOPT~\citep{wachter2006implementation} with sensitivities through the sIPOPT extension~\citep{pirnay2011sipopt}. The dataset used for the training contains optimal inputs for \num{368000} randomly sampled initial states with $1\%$ of initial states densely sampled around the upright position. Computation of the dataset takes over~$\num{12000}$ CPU core hours\footnote{computed in parallel with Intel Xeon Platinum 8160 ”SkyLake” CPU at 2.1 GHz} due to random reinitializations and high numerical accuracy for sensitivities. 
This shows that without the proposed method, tuning a nominal AMPC by recomputing the dataset would require large amounts of resources and time.

\subsection{Implementation of the parameter-adaptive AMPC}
We use fully-connected feedforward NNs with $50$ neurons per layer, tangent-hyperbolic activations, and $5$ and $8$ layers for~$\ampccontr{\cdot}$ and~$\ampcgrad{\cdot}$, respectively.
While ReLU activations are well suited to represent piecewise affine solutions of linear MPC, tangent-hyperbolic activations yield smooth policies for nonlinear systems which is favored in practical applications~\cite{carius2020mpc,nubert2020safe}.
Training both NNs in Jax~\citep{jax2018github} with a mixture of Minkowski and linear loss takes~$4$ hours\footnote{on an Nvidia RTX4070}.
For hardware experiments, the parameter-adaptive AMPC is implemented on a STM32G474 MCU\footnote{Arm Cortex-M4 core at 170MHz with 96kByte of SRAM and 512kByte flash}.
We use the embedded C++ library builder \textit{modm} for hardware abstraction.
The NN forward pass is implemented as single-precision floating point matrix-vector multiplication with a lookup table for the tangent hyperbolic and takes less than 2ms to evaluate for both NNs.
Due to the symmetry of the problem, we can reset~$\pendang\gets\mathrm{mod}(\pendang + \pi, 2\pi) - \pi$ before evaluating the AMPC in practical experiments.
Hyperparameters in this example were chosen for simplicity and based on engineering intuition, thus further tuning, quantization, or pruning could improve performance.

\subsection{Simulation results: study of parameter variations}\label{sec:results}

\begin{figure}[t]
    \def\figwidth{0.5\linewidth}
    \def\figheight{5cm}
    \begin{tikzpicture}

\definecolor{darkgray176}{RGB}{176,176,176}

\begin{groupplot}[group style={group size=2 by 1}]
\nextgroupplot[
height=\figheight,
label style={font=\footnotesize},
tick align=outside,
tick pos=left,
ticklabel style={font=\footnotesize},
title={\footnotesize\textbf{parameter-adaptive AMPC (ours)}},
width=\figwidth,
x grid style={darkgray176},
xmin=-0.5, xmax=4.5,
xtick style={color=black},
xtick={0,1,2,3,4},
xticklabels={
  \(\displaystyle m_\mathrm{add}\),
  \(\displaystyle C_1\),
  \(\displaystyle M\),
  \(\displaystyle C_2\),
  \(\displaystyle C_3\)
},
y dir=reverse,
y grid style={darkgray176},
ylabel={norm. parameter deviation},
ymin=-0.5, ymax=8.5,
ytick style={color=black},
ytick={0,1,2,3,4,5,6,7,8},
yticklabels={-1.00,-0.75,-0.50,-0.25,0.00,0.25,0.50,0.75,1.00}
]
\addplot graphics [includegraphics cmd=\pgfimage,xmin=-0.5, xmax=4.5, ymin=8.5, ymax=-0.5] {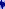};
\draw (axis cs:0,0) node[
  scale=0.5,
  text=white,
  rotate=0.0
]{0.02};
\draw (axis cs:1,0) node[
  scale=0.5,
  text=white,
  rotate=0.0
]{0.995};
\draw (axis cs:2,0) node[
  scale=0.5,
  text=white,
  rotate=0.0
]{0.0};
\draw (axis cs:3,0) node[
  scale=0.5,
  text=white,
  rotate=0.0
]{0.0};
\draw (axis cs:4,0) node[
  scale=0.5,
  text=white,
  rotate=0.0
]{0.055};
\draw (axis cs:0,1) node[
  scale=0.5,
  text=white,
  rotate=0.0
]{0.65};
\draw (axis cs:1,1) node[
  scale=0.5,
  text=white,
  rotate=0.0
]{0.98};
\draw (axis cs:2,1) node[
  scale=0.5,
  text=white,
  rotate=0.0
]{0.3};
\draw (axis cs:3,1) node[
  scale=0.5,
  text=white,
  rotate=0.0
]{0.04};
\draw (axis cs:4,1) node[
  scale=0.5,
  text=white,
  rotate=0.0
]{0.285};
\draw (axis cs:0,2) node[
  scale=0.5,
  text=white,
  rotate=0.0
]{1.0};
\draw (axis cs:1,2) node[
  scale=0.5,
  text=white,
  rotate=0.0
]{1.0};
\draw (axis cs:2,2) node[
  scale=0.5,
  text=white,
  rotate=0.0
]{0.99};
\draw (axis cs:3,2) node[
  scale=0.5,
  text=white,
  rotate=0.0
]{0.945};
\draw (axis cs:4,2) node[
  scale=0.5,
  text=white,
  rotate=0.0
]{0.46};
\draw (axis cs:0,3) node[
  scale=0.5,
  text=white,
  rotate=0.0
]{1.0};
\draw (axis cs:1,3) node[
  scale=0.5,
  text=white,
  rotate=0.0
]{1.0};
\draw (axis cs:2,3) node[
  scale=0.5,
  text=white,
  rotate=0.0
]{0.88};
\draw (axis cs:3,3) node[
  scale=0.5,
  text=white,
  rotate=0.0
]{1.0};
\draw (axis cs:4,3) node[
  scale=0.5,
  text=white,
  rotate=0.0
]{0.985};
\draw (axis cs:0,4) node[
  scale=0.5,
  text=white,
  rotate=0.0
]{1.0};
\draw (axis cs:1,4) node[
  scale=0.5,
  text=white,
  rotate=0.0
]{1.0};
\draw (axis cs:2,4) node[
  scale=0.5,
  text=white,
  rotate=0.0
]{1.0};
\draw (axis cs:3,4) node[
  scale=0.5,
  text=white,
  rotate=0.0
]{1.0};
\draw (axis cs:4,4) node[
  scale=0.5,
  text=white,
  rotate=0.0
]{1.0};
\draw (axis cs:0,5) node[
  scale=0.5,
  text=white,
  rotate=0.0
]{0.37};
\draw (axis cs:1,5) node[
  scale=0.5,
  text=white,
  rotate=0.0
]{0.995};
\draw (axis cs:2,5) node[
  scale=0.5,
  text=white,
  rotate=0.0
]{1.0};
\draw (axis cs:3,5) node[
  scale=0.5,
  text=white,
  rotate=0.0
]{0.815};
\draw (axis cs:4,5) node[
  scale=0.5,
  text=white,
  rotate=0.0
]{0.965};
\draw (axis cs:0,6) node[
  scale=0.5,
  text=white,
  rotate=0.0
]{0.255};
\draw (axis cs:1,6) node[
  scale=0.5,
  text=white,
  rotate=0.0
]{0.885};
\draw (axis cs:2,6) node[
  scale=0.5,
  text=white,
  rotate=0.0
]{0.995};
\draw (axis cs:3,6) node[
  scale=0.5,
  text=white,
  rotate=0.0
]{0.775};
\draw (axis cs:4,6) node[
  scale=0.5,
  text=white,
  rotate=0.0
]{0.345};
\draw (axis cs:0,7) node[
  scale=0.5,
  text=white,
  rotate=0.0
]{0.175};
\draw (axis cs:1,7) node[
  scale=0.5,
  text=white,
  rotate=0.0
]{0.61};
\draw (axis cs:2,7) node[
  scale=0.5,
  text=white,
  rotate=0.0
]{0.99};
\draw (axis cs:3,7) node[
  scale=0.5,
  text=white,
  rotate=0.0
]{0.18};
\draw (axis cs:4,7) node[
  scale=0.5,
  text=white,
  rotate=0.0
]{0.035};
\draw (axis cs:0,8) node[
  scale=0.5,
  text=white,
  rotate=0.0
]{0.08};
\draw (axis cs:1,8) node[
  scale=0.5,
  text=white,
  rotate=0.0
]{0.18};
\draw (axis cs:2,8) node[
  scale=0.5,
  text=white,
  rotate=0.0
]{0.93};
\draw (axis cs:3,8) node[
  scale=0.5,
  text=white,
  rotate=0.0
]{0.01};
\draw (axis cs:4,8) node[
  scale=0.5,
  text=white,
  rotate=0.0
]{0.015};

\nextgroupplot[
height=\figheight,
label style={font=\footnotesize},
scaled y ticks=manual:{}{\pgfmathparse{#1}},
tick align=outside,
tick pos=left,
ticklabel style={font=\footnotesize},
title={\footnotesize nominal AMPC},
width=\figwidth,
x grid style={darkgray176},
xmin=-0.5, xmax=4.5,
xtick style={color=black},
xtick={0,1,2,3,4},
xticklabels={
  \(\displaystyle m_\mathrm{add}\),
  \(\displaystyle C_1\),
  \(\displaystyle M\),
  \(\displaystyle C_2\),
  \(\displaystyle C_3\)
},
y dir=reverse,
y grid style={darkgray176},
ymin=-0.5, ymax=8.5,
ytick style={color=black},
yticklabels={}
]
\addplot graphics [includegraphics cmd=\pgfimage,xmin=-0.5, xmax=4.5, ymin=8.5, ymax=-0.5] {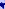};
\draw (axis cs:0,0) node[
  scale=0.5,
  text=white,
  rotate=0.0
]{0.0};
\draw (axis cs:1,0) node[
  scale=0.5,
  text=white,
  rotate=0.0
]{0.0};
\draw (axis cs:2,0) node[
  scale=0.5,
  text=white,
  rotate=0.0
]{0.0};
\draw (axis cs:3,0) node[
  scale=0.5,
  text=white,
  rotate=0.0
]{0.0};
\draw (axis cs:4,0) node[
  scale=0.5,
  text=white,
  rotate=0.0
]{0.145};
\draw (axis cs:0,1) node[
  scale=0.5,
  text=white,
  rotate=0.0
]{0.0};
\draw (axis cs:1,1) node[
  scale=0.5,
  text=white,
  rotate=0.0
]{0.0};
\draw (axis cs:2,1) node[
  scale=0.5,
  text=white,
  rotate=0.0
]{0.0};
\draw (axis cs:3,1) node[
  scale=0.5,
  text=white,
  rotate=0.0
]{0.0};
\draw (axis cs:4,1) node[
  scale=0.5,
  text=white,
  rotate=0.0
]{0.065};
\draw (axis cs:0,2) node[
  scale=0.5,
  text=white,
  rotate=0.0
]{0.03};
\draw (axis cs:1,2) node[
  scale=0.5,
  text=white,
  rotate=0.0
]{0.735};
\draw (axis cs:2,2) node[
  scale=0.5,
  text=white,
  rotate=0.0
]{0.15};
\draw (axis cs:3,2) node[
  scale=0.5,
  text=white,
  rotate=0.0
]{0.01};
\draw (axis cs:4,2) node[
  scale=0.5,
  text=white,
  rotate=0.0
]{0.065};
\draw (axis cs:0,3) node[
  scale=0.5,
  text=white,
  rotate=0.0
]{0.98};
\draw (axis cs:1,3) node[
  scale=0.5,
  text=white,
  rotate=0.0
]{0.895};
\draw (axis cs:2,3) node[
  scale=0.5,
  text=white,
  rotate=0.0
]{0.935};
\draw (axis cs:3,3) node[
  scale=0.5,
  text=white,
  rotate=0.0
]{0.775};
\draw (axis cs:4,3) node[
  scale=0.5,
  text=white,
  rotate=0.0
]{1.0};
\draw (axis cs:0,4) node[
  scale=0.5,
  text=white,
  rotate=0.0
]{1.0};
\draw (axis cs:1,4) node[
  scale=0.5,
  text=white,
  rotate=0.0
]{1.0};
\draw (axis cs:2,4) node[
  scale=0.5,
  text=white,
  rotate=0.0
]{1.0};
\draw (axis cs:3,4) node[
  scale=0.5,
  text=white,
  rotate=0.0
]{1.0};
\draw (axis cs:4,4) node[
  scale=0.5,
  text=white,
  rotate=0.0
]{1.0};
\draw (axis cs:0,5) node[
  scale=0.5,
  text=white,
  rotate=0.0
]{0.99};
\draw (axis cs:1,5) node[
  scale=0.5,
  text=white,
  rotate=0.0
]{0.995};
\draw (axis cs:2,5) node[
  scale=0.5,
  text=white,
  rotate=0.0
]{1.0};
\draw (axis cs:3,5) node[
  scale=0.5,
  text=white,
  rotate=0.0
]{1.0};
\draw (axis cs:4,5) node[
  scale=0.5,
  text=white,
  rotate=0.0
]{0.055};
\draw (axis cs:0,6) node[
  scale=0.5,
  text=white,
  rotate=0.0
]{0.115};
\draw (axis cs:1,6) node[
  scale=0.5,
  text=white,
  rotate=0.0
]{0.96};
\draw (axis cs:2,6) node[
  scale=0.5,
  text=white,
  rotate=0.0
]{0.92};
\draw (axis cs:3,6) node[
  scale=0.5,
  text=white,
  rotate=0.0
]{1.0};
\draw (axis cs:4,6) node[
  scale=0.5,
  text=white,
  rotate=0.0
]{0.055};
\draw (axis cs:0,7) node[
  scale=0.5,
  text=white,
  rotate=0.0
]{0.055};
\draw (axis cs:1,7) node[
  scale=0.5,
  text=white,
  rotate=0.0
]{0.0};
\draw (axis cs:2,7) node[
  scale=0.5,
  text=white,
  rotate=0.0
]{0.76};
\draw (axis cs:3,7) node[
  scale=0.5,
  text=white,
  rotate=0.0
]{1.0};
\draw (axis cs:4,7) node[
  scale=0.5,
  text=white,
  rotate=0.0
]{0.05};
\draw (axis cs:0,8) node[
  scale=0.5,
  text=white,
  rotate=0.0
]{0.045};
\draw (axis cs:1,8) node[
  scale=0.5,
  text=white,
  rotate=0.0
]{0.0};
\draw (axis cs:2,8) node[
  scale=0.5,
  text=white,
  rotate=0.0
]{0.15};
\draw (axis cs:3,8) node[
  scale=0.5,
  text=white,
  rotate=0.0
]{0.67};
\draw (axis cs:4,8) node[
  scale=0.5,
  text=white,
  rotate=0.0
]{0.045};
\end{groupplot}

\end{tikzpicture}%
    \vspace{-0.5cm}
    \caption{Approximate linear predictor effectiveness: \capt{Each cell indicates the fraction of closed-loop simulations from random initial states where the pendulum is stabilized upright, given a deviation in parameters (scaled to $\pm1$ in this plot). The use of approximate linear predictors enables the NN to stabilize the system across a broader parameter range.}}\label{fig:heatmap}
\end{figure}

We test parameter-adaptive AMPC in simulation by systematically changing each parameter individually within the bounds~$\param-\paramnom\in\pm\simuparamvalues$ while keeping all other parameters at nominal values.
We evaluate for the same random initial conditions if both AMPCs successfully perform the swing-up and stabilization within constraints.
As shown in~\fig\ref{fig:heatmap}, the parameter-adaptive AMPC (our method) stabilizes the system across a broader range of parameter values compared to the nominal controller.
Interestingly, the nominal controller is able to stabilize the system for a broader range of values in parameter~$C_2$, which we attribute to the linear predictions becoming inaccurate.
However, because one is free to choose the parameters in the AMPC, if the real pendulum dynamics lie in this range, one can just choose to use the nominal parameters to control the pendulum.

\subsection{Hardware results: transfer to system instances without retraining}
We test parameter-adaptive AMPC on two real cartpole systems (see \fig\ref{fig:cartpole}).
A video of our experiments can be found at~\videolink.
The controller with nominal parameters does not perform a swing-up and reaches a limit cycle on both the Quanser and self-made systems.
When using parameters identified from real trajectory data for each system instance (see \sect\ref{sec:impl-nlmpc-with-sens}), the parameter-adaptive AMPC (our method) reliably performs a successful swing-up, as depicted for an exemplary trajectory in~\fig\ref{fig:timeseries}.
In addition, the corrected parameter values for the Quanser pendulum fail on the self-made pendulum where the pendulum starts rotating indefinitely (depicted in orange in~\fig\ref{fig:timeseries}, right) or the cart violates its constraints.
There is a notable residual offset of the cart position in our experiments, which is due to an unmodeled friction effect in the pendulum bearing.
However, as can be seen in our video, the cart does not violate its constraints over a long horizon, but slowly drives back and forth on its rail.

\begin{figure}[t]
    \begin{minipage}[b]{0.48\textwidth}%
        \def\figwidth{\linewidth}%
        \def\figheight{8cm}%
        \begin{tikzpicture}

\definecolor{darkgray176}{RGB}{176,176,176}
\definecolor{lightgray204}{RGB}{204,204,204}

\begin{axis}[
height=\figheight/3 - 2/3*0.0cm,
label style={font=\footnotesize},
legend cell align={left},
legend style={
  fill opacity=0.8,
  draw opacity=1,
  text opacity=1,
  at={(1,1.1)},
  anchor=south east,
  draw=lightgray204
},
legend style={font=\tiny, inner sep=0.1pt, outer sep=0pt},
scaled x ticks=manual:{}{\pgfmathparse{#1}},
tick align=outside,
tick label style={font=\footnotesize},
tick pos=left,
title style={at={(0.1,1)},above,yshift=0pt},
title={\footnotesize\textbf{Quanser}},
width=\figwidth,
x grid style={darkgray176},
xmajorgrids,
xmin=0, xmax=8,
xtick style={color=black},
xticklabels={},
y grid style={darkgray176},
ylabel={\(\displaystyle u\) [\si{\volt}]},
ymajorgrids,
ymin=-10.8, ymax=10.8,
ytick style={color=black},
ytick={-9, 0, 9}
]
\addplot [very thick, red, const plot mark right]
table {%
0 5.95293
0.05 5.95293
0.1 6.53596
0.15 6.87591
0.2 8.36107
0.25 6.35138
0.3 4.07257
0.35 4.56783
0.4 3.91158
0.45 0.519734
0.5 0.0199204
0.55 -0.329388
0.6 0.912043
0.65 -8.60226
0.7 -7.54219
0.75 -6.31142
0.8 -5.66479
0.85 -5.09857
0.9 -4.48647
0.95 -5.44284
1 -5.00658
1.05 -5.00064
1.1 -5.51275
1.15 -2.41262
1.2 -0.482449
1.25 1.35541
1.3 6.34391
1.35 2.95122
1.4 2.12811
1.45 2.0471
1.5 1.26422
1.55 1.46728
1.6 1.59868
1.65 2.37194
1.7 3.35761
1.75 3.13305
1.8 3.39977
1.85 2.09579
1.9 -1.32733
1.95 -3.23558
2 -5.63576
2.05 -5.54561
2.1 -2.29821
2.15 -1.78175
2.2 -1.35903
2.25 -0.782232
2.3 -0.687294
2.35 -1.21027
2.4 -1.7593
2.45 -2.5553
2.5 -2.5231
2.55 -2.31823
2.6 -0.892772
2.65 2.05681
2.7 3.86718
2.75 5.33426
2.8 8.96188
2.85 9
2.9 9
2.95 8.69219
3 -2.71958
3.05 -2.60453
3.1 -1.98801
3.15 -1.68951
3.2 -1.35486
3.25 -0.864382
3.3 -0.413427
3.35 0.0635717
3.4 0.224468
3.45 0.26198
3.5 0.217699
3.55 -1.53548
3.6 -2.53316
3.65 -3.28747
3.7 -2.85971
3.75 -6.67833
3.8 -7.19096
3.85 -8.68037
3.9 -5.84025
3.95 -0.10205
4 1.48882
4.05 1.18593
4.1 0.879553
4.15 0.645628
4.2 0.285304
4.25 -0.0244349
4.3 -0.424401
4.35 -0.563996
4.4 -0.845785
4.45 -0.947261
4.5 -0.63015
4.55 -0.227273
4.6 0.87283
4.65 1.95786
4.7 2.35707
4.75 2.16678
4.8 4.21313
4.85 4.46721
4.9 8.8011
4.95 4.56865
5 -1.78202
5.05 -1.7933
5.1 -1.37427
5.15 -1.10519
5.2 -0.997264
5.25 -0.769647
5.3 -0.328333
5.35 0.0634392
5.4 0.292694
5.45 0.395238
5.5 0.377072
5.55 0.109631
5.6 -0.694706
5.65 -1.69074
5.7 -2.3298
5.75 -2.91018
5.8 -1.95618
5.85 -4.12192
5.9 -5.71891
5.95 -4.40153
6 -7.36
6.05 -0.054849
6.1 1.03626
6.15 0.784748
6.2 0.604603
6.25 0.208414
6.3 -0.0944898
6.35 -0.271866
6.4 -0.625814
6.45 -0.882701
6.5 -0.974256
6.55 -0.750645
6.6 -0.216398
6.65 0.577824
6.7 1.87876
6.75 2.47317
6.8 1.85546
6.85 4.22433
6.9 5.25991
6.95 3.13394
7 6.99734
7.05 0.642667
7.1 -1.53126
7.15 -1.3476
7.2 -1.01369
7.25 -0.81731
7.3 -0.237367
7.35 0.142296
7.4 0.36743
7.45 0.570553
7.5 0.680891
7.55 0.496789
7.6 -0.0648394
7.65 -0.723101
7.7 -2.14746
7.75 -2.55007
7.8 -2.17146
7.85 -4.64418
7.9 -4.17513
7.95 -4.71211
};
\addlegendentry{nominal without adaption}
\addplot [very thick, blue, const plot mark right]
table {%
0 -3.19376
0.05 -3.19376
0.1 -2.35374
0.15 -2.99712
0.2 4.73338
0.25 9
0.3 9
0.35 6.67791
0.4 4.6508
0.45 9
0.5 3.96851
0.55 2.46899
0.6 2.4139
0.65 -0.374527
0.7 4.15308
0.75 9
0.8 -9
0.85 -8.81973
0.9 -9
0.95 -9
1 -9
1.05 -7.20644
1.1 -5.91651
1.15 -5.13269
1.2 -6.17959
1.25 -5.90396
1.3 -5.67264
1.35 -2.18496
1.4 1.48078
1.45 5.77386
1.5 9
1.55 9
1.6 8.7361
1.65 9
1.7 9
1.75 9
1.8 6.22711
1.85 -4.45451
1.9 -4.15894
1.95 -3.18647
2 -2.66411
2.05 -2.34251
2.1 -1.75062
2.15 -0.8743
2.2 -0.0981351
2.25 0.657988
2.3 1.25248
2.35 1.6434
2.4 1.11916
2.45 -1.74191
2.5 -5.11051
2.55 -6.90394
2.6 -6.63262
2.65 -3.4922
2.7 -1.66302
2.75 -9
2.8 -2.51338
2.85 -2.22973
2.9 5.70515
2.95 6.44873
3 1.18359
3.05 0.379357
3.1 3.23731
3.15 3.36351
3.2 -0.223383
3.25 -0.984683
3.3 0.155689
3.35 1.27185
3.4 0.754354
3.45 0.379765
3.5 0.493034
3.55 1.59187
3.6 0.513462
3.65 1.9848
3.7 0.967195
3.75 0.587319
3.8 1.675
3.85 1.43187
3.9 0.442626
3.95 1.67626
4 0.605096
4.05 1.29737
4.1 1.36903
4.15 0.863847
4.2 0.72887
4.25 1.6495
4.3 1.07301
4.35 0.880906
4.4 1.28071
4.45 1.18635
4.5 0.678726
4.55 1.00995
4.6 1.85966
4.65 1.11678
4.7 0.494657
4.75 0.829909
4.8 0.838926
4.85 0.983957
4.9 0.85375
4.95 0.508978
5 1.25946
5.05 1.14733
5.1 0.753636
5.15 1.52338
5.2 1.04021
5.25 0.75341
5.3 1.15261
5.35 0.883203
5.4 0.857077
5.45 1.3951
5.5 1.48743
5.55 1.47666
5.6 -0.0142027
5.65 1.07672
5.7 0.383643
5.75 0.626268
5.8 1.50857
5.85 1.16646
5.9 0.446616
5.95 0.356123
6 1.03858
6.05 0.900725
6.1 0.778725
6.15 1.18129
6.2 1.03714
6.25 0.972012
6.3 0.737504
6.35 2.03135
6.4 1.32753
6.45 0.173455
6.5 0.498718
6.55 0.801564
6.6 1.72625
6.65 0.833378
6.7 0.613607
6.75 0.85229
6.8 1.11535
6.85 1.86653
6.9 1.18175
6.95 0.529979
7 0.288519
7.05 1.52317
7.1 1.46508
7.15 0.169049
7.2 0.727561
7.25 0.926288
7.3 0.948768
7.35 1.0141
7.4 1.13381
7.45 1.31963
7.5 0.961402
7.55 1.03764
7.6 1.66369
7.65 1.47239
7.7 0.173971
7.75 0.665205
7.8 1.02733
7.85 1.25616
7.9 0.745633
7.95 0.821177
};
\addlegendentry{\textbf{adapted to true Quanser param.}}
\path [draw=black, draw opacity=0.7, semithick, dash pattern=on 5.55pt off 2.4pt]
(axis cs:0,9)
--(axis cs:7.95,9);

\path [draw=black, draw opacity=0.7, semithick, dash pattern=on 5.55pt off 2.4pt]
(axis cs:0,-9)
--(axis cs:7.95,-9);

\end{axis}

\end{tikzpicture}%
        \begin{tikzpicture}

\definecolor{darkgray176}{RGB}{176,176,176}

\begin{axis}[
height=\figheight/3 - 2/3*0.0cm,
label style={font=\footnotesize},
scaled x ticks=manual:{}{\pgfmathparse{#1}},
tick align=outside,
tick label style={font=\footnotesize},
tick pos=left,
width=\figwidth,
x grid style={darkgray176},
xmajorgrids,
xmin=0, xmax=8,
xtick style={color=black},
xticklabels={},
y grid style={darkgray176},
ylabel={\(\displaystyle x\) [\si{\meter}]},
ymajorgrids,
ymin=-0.48, ymax=0.48,
ytick style={color=black},
ytick={-0.4, 0, 0.4}
]
\addplot [very thick, red]
table {%
0 0
0.05 0.00727966
0.1 0.0272987
0.15 0.0584648
0.2 0.0990034
0.25 0.146003
0.3 0.191387
0.35 0.233017
0.4 0.271759
0.45 0.30297
0.5 0.323581
0.55 0.335456
0.6 0.34244
0.65 0.335888
0.7 0.311274
0.75 0.277378
0.8 0.241616
0.85 0.20474
0.9 0.170117
0.95 0.135174
1 0.0995038
1.05 0.063697
1.1 0.0271167
1.15 -0.00732516
1.2 -0.0330542
1.25 -0.0466808
1.3 -0.0293689
1.35 -0.00762089
1.4 0.00996403
1.45 0.0224532
1.5 0.0307793
1.55 0.0360116
1.6 0.0403111
1.65 0.0454069
1.7 0.0539377
1.75 0.0654259
1.8 0.0790525
1.85 0.0919284
1.9 0.0990034
1.95 0.0952953
2 0.0798715
2.05 0.0546884
2.1 0.0297101
2.15 0.0124892
2.2 0.00216115
2.25 -0.00279812
2.3 -0.00395831
2.35 -0.00407206
2.4 -0.00520951
2.45 -0.0169252
2.5 -0.0269802
2.55 -0.036421
2.6 -0.042313
2.65 -0.0406978
2.7 -0.0296874
2.75 -0.00907682
2.8 0.0221347
2.85 0.0618316
2.9 0.106147
2.95 0.153874
3 0.192251
3.05 0.212043
3.1 0.220096
3.15 0.221961
3.2 0.221484
3.25 0.221438
3.3 0.221666
3.35 0.222098
3.4 0.222098
3.45 0.221688
3.5 0.219618
3.55 0.213521
3.6 0.200236
3.65 0.151599
3.7 0.126598
3.75 0.103576
3.8 0.0785976
3.85 0.0481595
3.9 0.0138086
3.95 -0.0150598
4 -0.0338732
4.05 -0.0448836
4.1 -0.0519131
4.15 -0.0571453
4.2 -0.0613311
4.25 -0.0649255
4.3 -0.0681558
4.35 -0.0707947
4.4 -0.0729331
4.45 -0.0742298
4.5 -0.0736611
4.55 -0.0700212
4.6 -0.0618544
4.65 -0.0465671
4.7 -0.0255926
4.75 -0.00368533
4.8 0.0152418
4.85 0.0462713
4.9 0.0670866
4.95 0.0940441
5 0.11561
5.05 0.12728
5.1 0.132604
5.15 0.135106
5.2 0.136448
5.25 0.137586
5.3 0.138859
5.35 0.140156
5.4 0.14118
5.45 0.141794
5.5 0.141498
5.55 0.139724
5.6 0.134765
5.65 0.123891
5.7 0.105737
5.75 0.0817369
5.8 0.059352
5.85 0.0420855
5.9 0.0259793
5.95 0.0103963
6 -0.00875834
6.05 -0.0394694
6.1 -0.0457026
6.15 -0.0498657
6.2 -0.0528913
6.25 -0.0556439
6.3 -0.0580325
6.35 -0.0603984
6.4 -0.0626733
6.45 -0.0648117
6.5 -0.0662676
6.55 -0.0659719
6.6 -0.0628781
6.65 -0.0551889
6.7 -0.0411301
6.75 -0.0200873
6.8 0.00179717
6.85 0.0207925
6.9 0.0382637
6.95 0.0525728
7 0.0679056
7.05 0.0843531
7.1 0.0942488
7.15 0.0983664
7.2 0.0999588
7.25 0.102234
7.3 0.10419
7.35 0.10642
7.4 0.108581
7.45 0.110446
7.5 0.111424
7.55 0.110856
7.6 0.107944
7.65 0.101233
7.7 0.0876971
7.75 0.0669046
7.8 0.0446562
7.85 0.0241366
7.9 0.00675643
7.95 -0.00775739
};
\addplot [very thick, blue]
table {%
0 0
0.05 -0.00323035
0.1 -0.0108057
0.15 -0.0211565
0.2 -0.0244779
0.25 -0.00875834
0.3 0.0256835
0.35 0.0705899
0.4 0.116588
0.45 0.165612
0.5 0.216297
0.55 0.260362
0.6 0.297784
0.65 0.325446
0.7 0.3484
0.75 0.379202
0.8 0.400427
0.85 0.391737
0.9 0.362891
0.95 0.321147
1 0.270985
1.05 0.218322
1.1 0.16725
1.15 0.119341
1.2 0.0247736
1.25 -0.0237499
1.3 -0.0706127
1.35 -0.112607
1.4 -0.141953
1.45 -0.14921
1.5 -0.131898
1.55 -0.096228
1.6 -0.0517993
1.65 -0.00493652
1.7 0.0441557
1.75 0.0972744
1.8 0.151076
1.85 0.191728
1.9 0.210132
1.95 0.214249
2 0.211588
2.05 0.206515
2.1 0.201146
2.15 0.197415
2.2 0.196483
2.25 0.197029
2.3 0.198098
2.35 0.199645
2.4 0.196164
2.45 0.186337
2.5 0.165157
2.55 0.130897
2.6 0.0919512
2.65 0.059261
2.7 0.0390599
2.75 0.0193138
2.8 -0.00350334
2.85 -0.0229309
2.9 -0.0309386
2.95 -0.022021
3 -0.00712042
3.05 0.00391282
3.1 0.0142408
3.15 0.0278447
3.2 0.0402884
3.25 0.0474088
3.3 0.0509121
3.35 0.0532098
3.4 0.0549842
3.45 0.0557576
3.5 0.0557576
3.55 0.0566676
3.6 0.0586013
3.65 0.0607169
3.7 0.0639245
3.75 0.0658809
3.8 0.0680648
3.85 0.0713634
3.9 0.074389
3.95 0.0772781
4 0.0803947
4.05 0.0827606
4.1 0.085627
4.15 0.0881749
4.2 0.089972
4.25 0.0921559
4.3 0.0951588
4.35 0.0975702
4.4 0.0999361
4.45 0.10262
4.5 0.1046
4.55 0.105692
4.6 0.108012
4.65 0.111493
4.7 0.114177
4.75 0.115815
4.8 0.118886
4.85 0.121843
4.9 0.123936
4.95 0.124937
5 0.125824
5.05 0.127144
5.1 0.128349
5.15 0.130033
5.2 0.132376
5.25 0.134128
5.3 0.135493
5.35 0.136767
5.4 0.137654
5.45 0.138746
5.5 0.141089
5.55 0.144342
5.6 0.147049
5.65 0.14871
5.7 0.149779
5.75 0.149938
5.8 0.150894
5.85 0.152805
5.9 0.15417
5.95 0.154329
6 0.154374
6.05 0.155284
6.1 0.156672
6.15 0.157764
6.2 0.158742
6.25 0.159174
6.3 0.159197
6.35 0.160744
6.4 0.163792
6.45 0.166067
6.5 0.166704
6.55 0.166727
6.6 0.16791
6.65 0.169707
6.7 0.170549
6.75 0.17064
6.8 0.170731
6.85 0.172232
6.9 0.174848
6.95 0.176782
7 0.177351
7.05 0.178238
7.1 0.180536
7.15 0.182469
7.2 0.18297
7.25 0.18297
7.3 0.18297
7.35 0.182992
7.4 0.183174
7.45 0.18388
7.5 0.184562
7.55 0.184972
7.6 0.18645
7.65 0.189271
7.7 0.191751
7.75 0.192729
7.8 0.193207
7.85 0.194071
7.9 0.194867
7.95 0.195118
};
\path [draw=black, draw opacity=0.7, semithick, dash pattern=on 5.55pt off 2.4pt]
(axis cs:0,0.4)
--(axis cs:7.95,0.4);

\path [draw=black, draw opacity=0.7, semithick, dash pattern=on 5.55pt off 2.4pt]
(axis cs:0,-0.4)
--(axis cs:7.95,-0.4);

\end{axis}

\end{tikzpicture}%
        \begin{tikzpicture}

\definecolor{darkgray176}{RGB}{176,176,176}

\begin{axis}[
height=\figheight/3 - 2/3*0.0cm,
label style={font=\footnotesize},
legend style={font=\tiny, inner sep=0.1pt, outer sep=0pt},
tick align=outside,
tick label style={font=\footnotesize},
tick pos=left,
width=\figwidth,
x grid style={darkgray176},
xlabel={\(\displaystyle t\) [s]},
xmajorgrids,
xmin=0, xmax=8,
xtick style={color=black},
y grid style={darkgray176},
ylabel={\(\displaystyle \theta\) [\si{\degree}]},
ymajorgrids,
ymin=-360, ymax=360,
ytick style={color=black},
ytick={-360, -180, 0, 180, 360}
]
\addplot [very thick, red]
table {%
0 180.088083460954
0.05 181.142325803995
0.1 183.603752491877
0.15 187.207084065455
0.2 191.601670354108
0.25 196.083919185417
0.3 199.687250758995
0.35 201.796881360666
0.4 202.499900575292
0.45 201.181524688696
0.5 197.578193115118
0.55 192.041128982974
0.6 185.273351506888
0.65 176.748312473137
0.7 166.025407337264
0.75 154.951279072975
0.8 145.019628652177
0.85 136.669914703736
0.9 130.693391942726
0.95 126.914162326043
1 125.507550938997
1.05 126.474703697178
1.1 129.902137227651
1.15 135.87865998866
1.2 144.580170023312
1.25 170.771216754332
1.3 187.822440737425
1.35 204.785429220069
1.4 220.078245729905
1.45 232.99787105231
1.5 243.017756973558
1.55 250.312655621164
1.6 254.707241909817
1.65 256.200942881723
1.7 254.882566995127
1.75 250.663878749579
1.8 243.808438730839
1.85 234.052686353146
1.9 221.484284159156
1.95 206.015569606214
2 188.173663865841
2.05 168.662159110456
2.1 149.677775526591
2.15 133.330143715918
2.2 119.619263678438
2.25 108.720460499459
2.3 101.074338723438
2.35 96.3279563485745
2.4 94.6583573335633
2.45 100.283084008363
2.5 107.490320113313
2.55 117.597868577216
2.6 130.429831356966
2.65 146.338004538773
2.7 164.531133407562
2.75 184.657994834918
2.8 205.927907063559
2.85 226.581889662435
2.9 245.65450874675
2.95 262.002140557423
3 275.185326465588
3.05 285.644671015701
3.1 293.466690834827
3.15 298.916092424316
3.2 301.816404783269
3.25 302.079965369029
3.3 299.706774181597
3.35 294.697404178768
3.4 286.963046902297
3.45 276.416039809529
3.5 262.881057815154
3.55 246.357527961376
3.6 226.933685748646
3.65 180.351644046715
3.7 156.709113588436
3.75 134.472621559409
3.8 114.433422674709
3.85 96.5915169343347
3.9 81.8263945538134
3.95 70.0486741171042
4 60.4688197825168
4.05 53.1738065433522
4.1 47.9883093143002
4.15 44.9120989122428
4.2 44.0331816545121
4.25 45.175774089562
4.3 48.427710647386
4.35 53.8769403495367
4.4 61.7871956691128
4.45 72.334202761881
4.5 85.6056241704963
4.55 102.040918523824
4.6 121.113537608139
4.65 143.349456679371
4.7 167.519681266965
4.75 192.392352111389
4.8 216.299016113222
4.85 257.079860139454
4.9 273.427491950127
4.95 286.523588273432
5 296.806461822645
5.05 305.156175771086
5.1 311.396259117856
5.15 315.527284820749
5.2 317.724577965076
5.25 317.900476008181
5.3 315.966743449615
5.35 311.923953247171
5.4 305.595634399952
5.45 296.894697323095
5.5 285.732333558356
5.55 271.58199489201
5.6 254.619006409367
5.65 234.316246938906
5.7 211.552633738359
5.75 186.415829350379
5.8 161.454923005505
5.85 138.251851176092
5.9 116.894276404795
5.95 98.4375869502462
6 83.0565349399593
6.05 60.732380368277
6.1 53.0859148175791
6.15 47.6367424112079
6.2 44.3847485576044
6.25 43.4179968698801
6.3 44.3847485576044
6.35 47.5488506854348
6.4 52.9101313660329
6.45 60.820042910932
6.5 71.10348941794
6.55 84.2872482839003
6.6 100.546644594123
6.65 119.619263678438
6.7 141.4162970786
6.75 165.937744794609
6.8 190.898651139483
6.85 214.717079640866
6.9 237.041234212549
6.95 256.376840924828
7 272.636810192846
7.05 285.820569058806
7.1 296.542901236885
7.15 304.804952642671
7.2 310.781475403681
7.25 315.966743449615
7.3 315.263724234989
7.35 312.451074418692
7.4 307.353468915413
7.45 299.882672224702
7.5 289.951021803904
7.55 277.119059024154
7.6 261.386783885452
7.65 242.929521473108
7.7 221.308386116051
7.75 196.962836443147
7.8 172.177828141378
7.85 147.91994101113
7.9 125.595786439447
7.95 105.908383640957
};
\addplot [very thick, blue]
table {%
0 179.999847960504
0.05 179.648624832089
0.1 178.681472073908
0.15 177.538994230417
0.2 177.363096187312
0.25 179.736287374744
0.3 184.306771706503
0.35 189.931498381302
0.4 195.117339385031
0.45 200.12670938786
0.5 204.081837047648
0.55 206.103805106665
0.6 206.015569606214
0.65 203.730613919233
0.7 199.423690173234
0.75 195.293237428136
0.8 189.052581123571
0.85 178.857370117013
0.9 166.201305380369
0.95 152.753985928648
1 139.658462563138
1.05 128.320200755294
1.1 119.619263678438
1.15 113.642740917428
1.2 110.21473442916
1.25 112.763823659697
1.3 118.037327206081
1.35 126.386468196728
1.4 138.251851176092
1.45 153.808801229484
1.5 173.583866570629
1.55 195.380899970791
1.6 217.880952585579
1.65 238.359610099145
1.7 256.552738967934
1.75 271.84555547777
1.8 284.150397086
1.85 293.906149463692
1.9 301.992302826374
1.95 308.320048715799
2 312.890533047557
2.05 315.439622278094
2.1 315.87908090696
2.15 314.033010891048
2.2 309.726660102845
2.25 303.04711812721
2.3 293.818486921037
2.35 281.777205898568
2.4 248.994279734568
2.45 228.076163592137
2.5 204.170072548098
2.55 177.627229730868
2.6 150.820253370082
2.65 125.947009567862
2.7 104.677670297016
2.75 86.3086433851218
2.8 71.015826875285
2.85 58.6233227244004
2.9 48.1640354700668
2.95 38.2324423450486
3 29.6191678108469
3.05 23.1152373994194
3.1 18.105466326134
3.15 13.535153881714
3.2 9.75586696925155
3.25 7.1191151960595
3.3 5.44921824299482
3.35 4.21875127090572
3.4 3.25195374655787
3.45 2.54882566995127
3.5 2.19726449643695
3.55 1.75781159714956
3.6 1.31835869786217
3.65 1.05468925012092
3.7 0.703122347028644
3.75 0.615236350833527
3.8 0.527343479144868
3.85 0.263671739572434
3.9 0.175781159714956
3.95 0
4 -0.087890579857478
4.05 -0.087890579857478
4.1 -0.175781159714956
4.15 -0.263671739572434
4.2 -0.175781159714956
4.25 -0.175781159714956
4.3 -0.263671739572434
4.35 -0.263671739572434
4.4 -0.263671739572434
4.45 -0.351562319429912
4.5 -0.351562319429912
4.55 -0.175781159714956
4.6 -0.175781159714956
4.65 -0.351562319429912
4.7 -0.43945289928739
4.75 -0.43945289928739
4.8 -0.263671739572434
4.85 -0.351562319429912
4.9 -0.43945289928739
4.95 -0.351562319429912
5 -0.263671739572434
5.05 -0.263671739572434
5.1 -0.175781159714956
5.15 -0.175781159714956
5.2 -0.263671739572434
5.25 -0.263671739572434
5.3 -0.263671739572434
5.35 -0.263671739572434
5.4 -0.175781159714956
5.45 -0.087890579857478
5.5 -0.087890579857478
5.55 -0.351562319429912
5.6 -0.43945289928739
5.65 -0.527343479144868
5.7 -0.527343479144868
5.75 -0.351562319429912
5.8 -0.263671739572434
5.85 -0.351562319429912
5.9 -0.43945289928739
5.95 -0.351562319429912
6 -0.263671739572434
6.05 -0.087890579857478
6.1 -0.087890579857478
6.15 -0.087890579857478
6.2 -0.087890579857478
6.25 -0.087890579857478
6.3 0.087890579857478
6.35 0.087890579857478
6.4 -0.175781159714956
6.45 -0.351562319429912
6.5 -0.351562319429912
6.55 -0.175781159714956
6.6 -0.175781159714956
6.65 -0.263671739572434
6.7 -0.263671739572434
6.75 -0.175781159714956
6.8 0
6.85 0
6.9 -0.175781159714956
6.95 -0.351562319429912
7 -0.263671739572434
7.05 -0.175781159714956
7.1 -0.351562319429912
7.15 -0.43945289928739
7.2 -0.351562319429912
7.25 -0.263671739572434
7.3 -0.175781159714956
7.35 -0.087890579857478
7.4 0
7.45 0
7.5 0
7.55 0.087890579857478
7.6 0.087890579857478
7.65 -0.175781159714956
7.7 -0.351562319429912
7.75 -0.351562319429912
7.8 -0.263671739572434
7.85 -0.263671739572434
7.9 -0.263671739572434
7.95 -0.175781159714956
};
\end{axis}

\end{tikzpicture}%
    \end{minipage}%
    \hspace*{0.5cm}
    \begin{minipage}[b]{0.48\textwidth}%
        \def\figwidth{\linewidth}%
        \def\figheight{8cm}%
        \begin{tikzpicture}

\definecolor{darkgray176}{RGB}{176,176,176}
\definecolor{green}{RGB}{0,128,0}
\definecolor{lightgray204}{RGB}{204,204,204}
\definecolor{orange}{RGB}{255,165,0}

\begin{axis}[
height=\figheight/3 - 2/3*0.0cm,
label style={font=\footnotesize},
legend cell align={left},
legend style={
  fill opacity=0.8,
  draw opacity=1,
  text opacity=1,
  at={(1,1.1)},
  anchor=south east,
  draw=lightgray204
},
legend style={font=\tiny, inner sep=0.1pt, outer sep=0pt},
scaled x ticks=manual:{}{\pgfmathparse{#1}},
tick align=outside,
tick label style={font=\footnotesize},
tick pos=left,
title style={at={(0.1,1)},above,yshift=0pt},
title={\footnotesize\textbf{self-made pendulum}},
width=\figwidth,
x grid style={darkgray176},
xmajorgrids,
xmin=0, xmax=8,
xtick style={color=black},
xticklabels={},
y grid style={darkgray176},
ylabel={\(\displaystyle u\) [\si{\volt}]},
ymajorgrids,
ymin=-10.8, ymax=10.8,
ytick style={color=black},
ytick={-9, 0, 9}
]
\addplot [very thick, orange, const plot mark right]
table {%
0 1.92344
0.05 1.92344
0.1 -2.11672
0.15 1.4162
0.2 2.69039
0.25 -8.05706
0.3 -2.6035
0.35 6.85217
0.4 9
0.45 -2.67663
0.5 -9
0.55 3.98724
0.6 4.51334
0.65 -9
0.7 -9
0.75 -4.84778
0.8 -5.9012
0.85 -6.48748
0.9 -9
0.95 -8.79402
1 2.18054
1.05 7.30088
1.1 4.19861
1.15 7.72859
1.2 9
1.25 8.04603
1.3 9
1.35 9
1.4 9
1.45 8.66002
1.5 5.72179
1.55 5.90515
1.6 3.02063
1.65 -6.71
1.7 -0.790519
1.75 8.59075
1.8 -3.92484
1.85 -9
1.9 -9
1.95 -8.9251
2 -9
2.05 -6.28546
2.1 -3.4092
2.15 -2.07914
2.2 -0.776287
2.25 -1.05659
2.3 -1.48151
2.35 -2.57096
2.4 -3.82529
2.45 -4.86012
2.5 -3.20254
2.55 -1.51233
2.6 0.559468
2.65 4.73194
2.7 9
2.75 9
2.8 6.265
2.85 4.40984
2.9 1.99358
2.95 -0.0612699
3 -1.06887
3.05 -0.999618
3.1 -0.814228
3.15 -0.303617
3.2 0.399199
3.25 1.17671
3.3 2.10844
3.35 3.18052
3.4 3.86487
3.45 4.34227
3.5 2.30692
3.55 0.958989
3.6 -1.02842
3.65 -4.95106
3.7 -9
3.75 -9
3.8 -9
3.85 -9
3.9 -0.299148
3.95 1.14667
4 -1.68607
4.05 -1.83345
4.1 2.90102
4.15 3.2441
4.2 2.03701
4.25 2.59178
4.3 2.61122
4.35 2.28804
4.4 1.92982
4.45 1.80694
4.5 1.91621
4.55 2.03796
4.6 2.30471
4.65 2.60513
4.7 2.81181
4.75 2.74775
4.8 2.25036
4.85 1.17041
4.9 -3.84675
4.95 -5.43106
5 -5.16342
5.05 -6.17798
5.1 -9
5.15 -6.16214
5.2 1.39089
5.25 3.57153
5.3 0.434876
5.35 0.0602773
5.4 1.18572
5.45 1.40278
5.5 1.41223
5.55 1.56001
5.6 1.75721
5.65 1.74199
5.7 1.61137
5.75 0.902303
5.8 0.776991
5.85 0.217275
5.9 0.434723
5.95 0.114682
6 0.464281
6.05 -3.90937
6.1 -8.99697
6.15 6.96148
6.2 7.0181
6.25 -2.25339
6.3 -0.482798
6.35 2.30411
6.4 0.92695
6.45 0.21161
6.5 -0.0188995
6.55 -0.0581391
6.6 -0.0193504
6.65 0.0446222
6.7 0.334965
6.75 0.894018
6.8 1.56543
6.85 2.05803
6.9 1.75002
6.95 1.08316
7 0.90618
7.05 0.902572
7.1 1.14412
7.15 -4.5419
7.2 -9
7.25 -5.18262
7.3 6.35967
7.35 -1.36478
7.4 0.0737769
7.45 2.12028
7.5 0.300743
7.55 -0.370319
7.6 -2.46096
7.65 -2.8209
7.7 -1.66777
7.75 -1.55828
7.8 -1.45132
7.85 -1.30562
7.9 -1.31279
7.95 -1.66577
};
\addlegendentry{adapted to Quanser param.}
\addplot [very thick, green, const plot mark right]
table {%
0 1.46737
0.05 1.46737
0.1 2.37063
0.15 -9
0.2 -1.60075
0.25 8.28918
0.3 9
0.35 -2.28372
0.4 -8.59494
0.45 2.00337
0.5 3.81213
0.55 -9
0.6 -9
0.65 -5.17237
0.7 -9
0.75 -9
0.8 -9
0.85 0.0399145
0.9 3.1367
0.95 5.65604
1 5.77212
1.05 9
1.1 9
1.15 9
1.2 9
1.25 9
1.3 7.88481
1.35 7.61869
1.4 5.15768
1.45 6.55389
1.5 -0.542721
1.55 -6.29635
1.6 2.2077
1.65 9
1.7 -8.69792
1.75 -9
1.8 -8.35202
1.85 -9
1.9 -6.71664
1.95 -3.43997
2 -2.32715
2.05 -1.683
2.1 -1.3028
2.15 -1.5364
2.2 -3.02627
2.25 -4.47598
2.3 -4.87704
2.35 -5.89778
2.4 -3.73067
2.45 -0.182053
2.5 3.3976
2.55 7.81661
2.6 9
2.65 5.00895
2.7 3.02557
2.75 0.377088
2.8 0.167447
2.85 -0.0199346
2.9 -0.17677
2.95 0.290896
3 0.959401
3.05 2.08339
3.1 3.06623
3.15 3.75679
3.2 4.63265
3.25 5.09695
3.3 2.70072
3.35 1.25894
3.4 -0.606146
3.45 -4.37631
3.5 -7.25924
3.55 -8.55829
3.6 -8.72348
3.65 -9
3.7 -5.74822
3.75 3.19665
3.8 7.52731
3.85 9
3.9 7.32479
3.95 5.00089
4 4.99972
4.05 4.56914
4.1 1.45298
4.15 -2.23825
4.2 -2.11006
4.25 -0.692493
4.3 -1.42524
4.35 -1.39116
4.4 -2.04596
4.45 -5.17091
4.5 -0.0648425
4.55 1.48386
4.6 0.858399
4.65 0.0181138
4.7 -0.104718
4.75 3.0369
4.8 -1.03181
4.85 -0.497333
4.9 -0.147613
4.95 -1.47394
5 -1.70634
5.05 0.0828628
5.1 -0.160615
5.15 -1.24952
5.2 -1.37053
5.25 -0.574397
5.3 -0.745602
5.35 -1.85514
5.4 -0.670364
5.45 -0.118623
5.5 -2.51724
5.55 -0.295022
5.6 0.155728
5.65 -0.816348
5.7 -1.20439
5.75 0.098385
5.8 -1.14592
5.85 -0.0942466
5.9 -1.20676
5.95 -0.177779
6 -0.397971
6.05 -0.475617
6.1 -0.474153
6.15 -0.572788
6.2 -0.545317
6.25 -1.4013
6.3 -0.684322
6.35 -0.694588
6.4 -0.63459
6.45 -1.45438
6.5 -0.721148
6.55 -0.793546
6.6 -1.54229
6.65 -0.822327
6.7 -1.80349
6.75 -1.10531
6.8 -0.6959
6.85 -2.12636
6.9 -1.18858
6.95 -1.241
7 -1.66245
7.05 -1.77949
7.1 -0.568366
7.15 -2.2014
7.2 -1.56797
7.25 -1.62406
7.3 -1.92975
7.35 -1.30118
7.4 -0.789423
7.45 -2.03058
7.5 -1.71342
7.55 -1.01468
7.6 -1.87142
7.65 -1.35903
7.7 -1.121
7.75 -1.32833
7.8 -1.69808
7.85 -2.0358
7.9 -0.382229
7.95 -1.98698
};
\addlegendentry{\textbf{adapted to true param.}}
\path [draw=black, draw opacity=0.7, semithick, dash pattern=on 5.55pt off 2.4pt]
(axis cs:0,9)
--(axis cs:7.95,9);

\path [draw=black, draw opacity=0.7, semithick, dash pattern=on 5.55pt off 2.4pt]
(axis cs:0,-9)
--(axis cs:7.95,-9);

\end{axis}

\end{tikzpicture}%
        \begin{tikzpicture}

\definecolor{darkgray176}{RGB}{176,176,176}
\definecolor{green}{RGB}{0,128,0}
\definecolor{orange}{RGB}{255,165,0}

\begin{axis}[
height=\figheight/3 - 2/3*0.0cm,
label style={font=\footnotesize},
scaled x ticks=manual:{}{\pgfmathparse{#1}},
tick align=outside,
tick label style={font=\footnotesize},
tick pos=left,
width=\figwidth,
x grid style={darkgray176},
xmajorgrids,
xmin=0, xmax=8,
xtick style={color=black},
xticklabels={},
y grid style={darkgray176},
ylabel={\(\displaystyle x\) [\si{\meter}]},
ymajorgrids,
ymin=-0.48, ymax=0.48,
ytick style={color=black},
ytick={-0.4, 0, 0.4}
]
\addplot [very thick, orange]
table {%
0 0
0.05 0.00113745
0.1 0.00125119
0.15 0.000227489
0.2 0.00298011
0.25 -0.00232039
0.3 -0.0193366
0.35 -0.0238409
0.4 -0.00648345
0.45 0.0144456
0.5 0.0124437
0.55 0.0042313
0.6 0.00909957
0.65 0.00621046
0.7 -0.020565
0.75 -0.0573046
0.8 -0.0959323
0.85 -0.139064
0.9 -0.189863
0.95 -0.249829
1 -0.301082
1.05 -0.321533
1.1 -0.317757
1.15 -0.301332
1.2 -0.268392
1.25 -0.169616
1.3 -0.108581
1.35 -0.0416078
1.4 0.0293234
1.45 0.10121
1.5 0.169798
1.55 0.233836
1.6 0.291823
1.65 0.3292
1.7 0.344851
1.75 0.363733
1.8 0.382114
1.85 0.377155
1.9 0.349742
1.95 0.30611
2 0.251262
2.05 0.190773
2.1 0.135538
2.15 0.0874242
2.2 0.0495472
2.25 0.0194048
2.3 -0.00450429
2.35 -0.0251603
2.4 -0.0458164
2.45 -0.0985484
2.5 -0.126825
2.55 -0.151098
2.6 -0.167159
2.65 -0.169161
2.7 -0.151553
2.75 -0.115633
2.8 -0.0710449
2.85 -0.0281632
2.9 0.0075299
2.95 0.0334637
3 0.0480685
3.05 0.0550297
3.1 0.0572591
3.15 0.0573501
3.2 0.0573501
3.25 0.0574183
3.3 0.0587832
3.35 0.0640155
3.4 0.0739795
3.45 0.0887663
3.5 0.103917
3.55 0.115314
3.6 0.119841
3.65 0.089972
3.7 0.0500932
3.75 -0.000227489
3.8 -0.0569633
3.85 -0.11718
3.9 -0.169912
3.95 -0.204376
4 -0.227353
4.05 -0.246416
4.1 -0.257268
4.15 -0.256039
4.2 -0.250193
4.25 -0.242504
4.3 -0.232449
4.35 -0.221029
4.4 -0.209677
4.45 -0.198803
4.5 -0.188247
4.55 -0.17776
4.6 -0.166477
4.65 -0.153919
4.7 -0.140179
4.75 -0.126097
4.8 -0.113017
4.85 -0.100937
4.9 -0.106533
4.95 -0.121934
5 -0.144228
5.05 -0.169912
5.1 -0.203489
5.15 -0.243709
5.2 -0.27715
5.25 -0.291619
5.3 -0.294508
5.35 -0.295122
5.4 -0.294735
5.45 -0.293461
5.5 -0.291118
5.55 -0.288525
5.6 -0.285044
5.65 -0.281063
5.7 -0.277219
5.75 -0.27549
5.8 -0.276718
5.85 -0.280199
5.9 -0.284817
5.95 -0.288707
6 -0.289799
6.05 -0.305814
6.1 -0.33582
6.15 -0.356817
6.2 -0.350356
6.25 -0.338322
6.3 -0.333727
6.35 -0.329837
6.4 -0.325014
6.45 -0.32233
6.5 -0.321693
6.55 -0.321693
6.6 -0.321715
6.65 -0.321715
6.7 -0.321715
6.75 -0.321738
6.8 -0.321442
6.85 -0.320896
6.9 -0.320623
6.95 -0.321215
7 -0.323194
7.05 -0.324855
7.1 -0.324696
7.15 -0.327448
7.2 -0.343031
7.25 -0.380544
7.3 -0.369488
7.35 -0.354224
7.4 -0.345761
7.45 -0.339437
7.5 -0.333886
7.55 -0.331702
7.6 -0.333431
7.65 -0.339937
7.7 -0.348514
7.75 -0.35643
7.8 -0.363414
7.85 -0.369147
7.9 -0.373856
7.95 -0.378383
};
\addplot [very thick, green]
table {%
0 0
0.05 0.000409481
0.1 0.00334409
0.15 -0.00366258
0.2 -0.022294
0.25 -0.0246598
0.3 -0.00389007
0.35 0.0197461
0.4 0.0207243
0.45 0.0119659
0.5 0.0133991
0.55 0.00762089
0.6 -0.0207243
0.65 -0.0590107
0.7 -0.104031
0.75 -0.159561
0.8 -0.223941
0.85 -0.281245
0.9 -0.314686
0.95 -0.323376
1 -0.314072
1.05 -0.289412
1.1 -0.248327
1.15 -0.196301
1.2 -0.135879
1.25 -0.00122844
1.3 0.0662676
1.35 0.133877
1.4 0.196551
1.45 0.257131
1.5 0.308066
1.55 0.335752
1.6 0.347126
1.65 0.365666
1.7 0.377109
1.75 0.361412
1.8 0.327084
1.85 0.279084
1.9 0.223781
1.95 0.16973
2 0.123026
2.05 0.0840118
2.1 0.0516401
2.15 0.0251603
2.2 0.00154693
2.25 -0.0237499
2.3 -0.0515718
2.35 -0.0845578
2.4 -0.148528
2.45 -0.170208
2.5 -0.177601
2.55 -0.166886
2.6 -0.13704
2.65 -0.0985256
2.7 -0.0623776
2.75 -0.0351016
2.8 -0.0181991
2.85 -0.00903133
2.9 -0.00454979
2.95 -0.00332134
3 -0.00309385
3.05 -0.00147868
3.1 0.00404931
3.15 0.0147868
3.2 0.0305746
3.25 0.0519586
3.3 0.0738203
3.35 0.0914507
3.4 0.101688
3.45 0.100277
3.5 0.0827379
3.55 0.0486372
3.6 -0.0489557
3.65 -0.106078
3.7 -0.163656
3.75 -0.205855
3.8 -0.21839
3.85 -0.204695
3.9 -0.174598
3.95 -0.137449
4 -0.0989579
4.05 -0.0597387
4.1 -0.0244096
4.15 -0.000295736
4.2 0.0102598
4.25 0.0130351
4.3 0.0128304
4.35 0.010692
4.4 0.00698392
4.45 -0.0031166
4.5 -0.0161972
4.55 -0.0224304
4.6 -0.0238409
4.65 -0.0238409
4.7 -0.0238409
4.75 -0.020929
4.8 -0.0168797
4.85 -0.0185859
4.9 -0.0186769
4.95 -0.0195641
5 -0.0219072
5.05 -0.0242959
5.1 -0.0247736
5.15 -0.0251148
5.2 -0.0263888
5.25 -0.0274807
5.3 -0.0275717
5.35 -0.0285499
5.4 -0.0304836
5.45 -0.0306201
5.5 -0.032895
5.55 -0.0361936
5.6 -0.0372628
5.65 -0.0372628
5.7 -0.0375812
5.75 -0.037695
5.8 -0.0378997
5.85 -0.037968
5.9 -0.038059
5.95 -0.0381727
6 -0.0381727
6.05 -0.0386049
6.1 -0.0389462
6.15 -0.0389234
6.2 -0.0389234
6.25 -0.0395831
6.3 -0.0400609
6.35 -0.0400609
6.4 -0.0400609
6.45 -0.0405841
6.5 -0.0410846
6.55 -0.0410846
6.6 -0.0416988
6.65 -0.0425178
6.7 -0.0439054
6.75 -0.0459073
6.8 -0.0470448
6.85 -0.04866
6.9 -0.0517311
6.95 -0.054097
7 -0.0568041
7.05 -0.0605349
7.1 -0.0639245
7.15 -0.0667454
7.2 -0.0711814
7.25 -0.0793938
7.3 -0.0822829
7.35 -0.0857407
7.4 -0.0881521
7.45 -0.0902223
7.5 -0.0939986
7.55 -0.0975247
7.6 -0.100368
7.65 -0.104258
7.7 -0.106965
7.75 -0.108831
7.8 -0.111311
7.85 -0.11536
7.9 -0.118977
7.95 -0.12157
};
\path [draw=black, draw opacity=0.7, semithick, dash pattern=on 5.55pt off 2.4pt]
(axis cs:0,0.4)
--(axis cs:7.95,0.4);

\path [draw=black, draw opacity=0.7, semithick, dash pattern=on 5.55pt off 2.4pt]
(axis cs:0,-0.4)
--(axis cs:7.95,-0.4);

\end{axis}

\end{tikzpicture}%
        \begin{tikzpicture}

\definecolor{darkgray176}{RGB}{176,176,176}
\definecolor{green}{RGB}{0,128,0}
\definecolor{orange}{RGB}{255,165,0}

\begin{axis}[
height=\figheight/3 - 2/3*0.0cm,
label style={font=\footnotesize},
legend style={font=\tiny, inner sep=0.1pt, outer sep=0pt},
tick align=outside,
tick label style={font=\footnotesize},
tick pos=left,
width=\figwidth,
x grid style={darkgray176},
xlabel={\(\displaystyle t\) [s]},
xmajorgrids,
xmin=0, xmax=8,
xtick style={color=black},
y grid style={darkgray176},
ylabel={\(\displaystyle \theta\) [\si{\degree}]},
ymajorgrids,
ymin=-360, ymax=360,
ytick style={color=black},
ytick={-360, -180, 0, 180, 360}
]
\addplot [very thick, orange]
table {%
0 179.999847960504
0.05 180.088083460954
0.1 180.088083460954
0.15 179.999847960504
0.2 180.263408546264
0.25 179.648624832089
0.3 178.154350902388
0.35 177.890790316628
0.4 179.648624832089
0.45 181.581784432861
0.5 181.494121890205
0.55 180.967000718685
0.6 181.40645934755
0.65 180.878765218235
0.7 178.330248945493
0.75 174.990477957675
0.8 171.650134012063
0.85 168.135037938935
0.9 164.531133407562
0.95 160.751903790879
1 158.642846147003
1.05 160.048884576254
1.1 164.091674778697
1.15 169.980534997052
1.2 177.802554816178
1.25 196.259817228522
1.3 205.664346477799
1.35 214.541181597761
1.4 222.363201416887
1.45 228.603284763657
1.5 232.73431046655
1.55 234.843941068222
1.6 234.579807524667
1.65 230.976475951089
1.7 224.824055146974
1.75 217.353258456263
1.8 207.861066664331
1.85 195.732696057002
1.9 180.967000718685
1.95 164.794693993323
2 148.44706218265
2.05 132.803022544398
2.1 119.443365635332
2.15 108.545135414149
2.2 100.458982051468
2.25 94.5701218331132
2.3 91.1426883026406
2.35 90.1755355444597
2.4 91.5821469315059
2.45 101.425561851853
2.5 109.775275800295
2.55 120.498180936168
2.6 133.593704301678
2.65 149.238316897725
2.7 167.519681266965
2.75 187.734205236975
2.8 208.388760793646
2.85 227.549042420616
2.9 244.160234817049
2.95 258.4864715265
3 269.824160376549
3.05 278.876893539616
3.1 285.556435515251
3.15 289.863359261249
3.2 291.972416905126
3.25 291.797091819816
3.3 289.511563175039
3.35 284.853416300625
3.4 277.82207823878
3.45 268.418121947298
3.5 256.728637011039
3.55 242.665960887348
3.6 226.14243103357
3.65 186.415829350379
3.7 163.828114192937
3.75 141.328061578149
3.8 119.795161721543
3.85 100.371319508813
3.9 83.8477896550349
3.95 69.2579923598236
4 56.865259025821
4.05 46.3183665246119
4.1 36.9140664584526
4.15 28.2129001984778
4.2 20.1269760189146
4.25 12.9199118013025
4.3 5.88868833101606
4.35 -0.87890579857478
4.4 -7.64646555069791
4.45 -14.6777463167639
4.5 -22.1484284159156
4.55 -30.5859194985712
4.6 -40.2538947420497
4.65 -51.4160293236703
4.7 -64.4239474423049
4.75 -79.2773053232764
4.8 -96.4161918490247
4.85 -137.548831961466
4.9 -161.10369987709
4.95 -186.152268764619
5 -211.201410609944
5.05 -235.107501653982
5.1 -257.255758182559
5.15 -276.943160981049
5.2 -294.082047506798
5.25 -309.638424602395
5.3 -323.964661311846
5.35 -337.060757635151
5.4 -349.189701200275
5.45 -360.879186136534
5.5 -372.656333615448
5.55 -384.872939723228
5.6 -398.232596632293
5.65 -412.734158427054
5.7 -429.170025738177
5.75 -447.363154606966
5.8 -468.105372706292
5.85 -490.429527277975
5.9 -514.863312451329
5.95 -539.824218796203
6 -564.082105926452
6.05 -609.346344699582
6.1 -629.474352042528
6.15 -646.611519694891
6.2 -662.608501334943
6.25 -676.760558874675
6.3 -688.712458481103
6.35 -699.082994572971
6.4 -708.576905238289
6.45 -717.452021484866
6.5 -725.977633476412
6.55 -734.766806053719
6.6 -743.991426555325
6.65 -754.190075308654
6.7 -765.614853743563
6.75 -778.449108354493
6.8 -793.300174404284
6.85 -810.265454718108
6.9 -829.688723973042
6.95 -851.042860997568
7 -874.070034783876
7.05 -897.624329741704
7.1 -921.092681030263
7.15 -943.85629423081
7.2 -965.388048171827
7.25 -1001.69065407132
7.3 -1016.63339336833
7.35 -1029.6395353178
7.4 -1040.27363199543
7.45 -1048.97686090346
7.5 -1056.44250097402
7.55 -1062.68774094094
7.6 -1067.78133573966
7.65 -1072.00403468977
7.7 -1075.60793922114
7.75 -1079.03422683603
7.8 -1082.54645812018
7.85 -1086.24203589877
7.9 -1090.37306160166
7.95 -1094.76764789032
};
\addplot [very thick, green]
table {%
0 179.999847960504
0.05 180.088083460954
0.1 180.351644046715
0.15 179.560389331639
0.2 177.978452859283
0.25 177.890790316628
0.3 179.999847960504
0.35 182.197141104831
0.4 182.285376605281
0.45 181.581784432861
0.5 181.670019933311
0.55 180.79110267558
0.6 178.066688359733
0.65 174.46278382836
0.7 170.419993625917
0.75 165.849509294158
0.8 161.10369987709
0.85 157.851591431927
0.9 157.588030846167
0.95 160.136547118909
1 164.882929493773
1.05 171.826032055168
1.1 180.52754208982
1.15 190.283294467512
1.2 200.21494488831
1.25 218.408073757099
1.3 225.439411818944
1.35 230.712915365329
1.4 233.789125767386
1.45 234.931603610877
1.5 233.349667138521
1.55 228.603284763657
1.6 221.660182202262
1.65 213.486366296925
1.7 203.027594704607
1.75 189.843835838647
1.8 174.55101932881
1.85 158.378712603448
1.9 142.382876878985
1.95 128.144302712189
2 116.27949269062
2.05 107.050861484448
2.1 100.107185965257
2.15 95.7131726343992
2.2 93.6912045753825
2.25 94.0430006615928
2.3 96.7674149774399
2.35 101.777357938064
2.4 118.652110920257
2.45 130.693391942726
2.5 145.459087281043
2.55 162.509738306341
2.6 181.933580519071
2.65 201.972779403771
2.7 220.868927487186
2.75 237.304794798309
2.8 251.103337378444
2.85 262.353363685838
2.9 270.96663822004
2.95 277.29495706726
3 281.074186683943
3.05 282.568460613644
3.1 281.689543355913
3.15 278.437434910751
3.2 272.724472735501
3.25 264.638892330615
3.3 254.267783280952
3.35 241.523483043857
3.4 226.49422711978
3.45 209.267678051377
3.5 189.492039752437
3.55 167.95913989583
3.6 125.507550938997
3.65 106.523167355132
3.7 89.8243124160445
3.75 76.113432378564
3.8 63.8962533129894
3.85 52.5586217587202
3.9 42.0996209832841
3.95 32.958996094444
4 25.224638817973
4.05 18.4570332292263
4.1 12.6562366239833
4.15 8.43749108265553
4.2 5.88868833101606
4.25 4.30664299667878
4.3 3.25195374655787
4.35 2.46093967375616
4.4 1.9335950486957
4.45 2.02148677446877
4.5 2.54882566995127
4.55 2.54882566995127
4.6 2.19726449643695
4.65 1.75781159714956
4.7 1.49414214940831
4.75 0.966797524347848
4.8 0.175781159714956
4.85 0.175781159714956
4.9 0.087890579857478
4.95 0
5 0.087890579857478
5.05 0.175781159714956
5.1 0.087890579857478
5.15 0
5.2 0
5.25 0
5.3 -0.087890579857478
5.35 -0.087890579857478
5.4 0
5.45 -0.087890579857478
5.5 0
5.55 0.175781159714956
5.6 0.175781159714956
5.65 0.087890579857478
5.7 0.087890579857478
5.75 0
5.8 0
5.85 -0.087890579857478
5.9 -0.087890579857478
5.95 -0.087890579857478
6 -0.087890579857478
6.05 -0.175781159714956
6.1 -0.175781159714956
6.15 -0.175781159714956
6.2 -0.263671739572434
6.25 -0.263671739572434
6.3 -0.263671739572434
6.35 -0.263671739572434
6.4 -0.351562319429912
6.45 -0.351562319429912
6.5 -0.351562319429912
6.55 -0.43945289928739
6.6 -0.43945289928739
6.65 -0.527343479144868
6.7 -0.527343479144868
6.75 -0.43945289928739
6.8 -0.527343479144868
6.85 -0.527343479144868
6.9 -0.43945289928739
6.95 -0.43945289928739
7 -0.43945289928739
7.05 -0.263671739572434
7.1 -0.263671739572434
7.15 -0.263671739572434
7.2 -0.175781159714956
7.25 0
7.3 0
7.35 0.087890579857478
7.4 0
7.45 -0.087890579857478
7.5 0
7.55 0
7.6 0
7.65 0.087890579857478
7.7 0.087890579857478
7.75 0
7.8 -0.087890579857478
7.85 0.087890579857478
7.9 0.087890579857478
7.95 0.087890579857478
};
\end{axis}

\end{tikzpicture}%
    \end{minipage}%
    \vspace{-0.5cm}
    \caption{Closed-loop evaluation on Quanser and self-made cartpole pendulum hardware:
    \capt{The parameter-adaptive AMPC with true identified system parameters stabilizes both pendulums, Quanser (blue) and self-made (green), while no adaption (red) or parameters from the wrong system instance (orange) fail during the swing-up. 
    Nominal, Quanser, and self-made system have significantly different parameters.
    The slow drift in the cart position for both systems is due to unmodeled stick-slip effects in the pendulum bearing near zero velocity.
    However, as our video shows on a long horizon, the cart slowly oscillates on its rail.
    Notably, with parameter-adaptive AMPC, the systems satisfy constraints.
    }}\label{fig:timeseries}%
\end{figure}

\section{Conclusion}\label{sec:conclusion}
In this work, we presented an approach that allows for tuning parameters of AMPC online without retraining.
Alongside the output of an MPC, parameter-adaptive AMPC also learns the gradient of this output with respect to MPC parameters offline.
Using a linear predictor, we can then adapt the output of the AMPC to parameter changes online.
Our work resulted in the, to the best of our knowledge, first real-world implementation of AMPC on a resource-constrained MCU controlling a fast nonlinear system.
Our simulation and hardware experiments demonstrate the efficacy of parameter-adaptive AMPC.
Throughout experiments, we noticed that tuning is very intuitive due to the physical meaning of dynamics parameters.
Further, identified parameters from real data work well in practice.
As a result, the tuning process for the AMPC took only minutes from first try to a successful swing-up.
Without parameter adaption, this would have required weeks for recomputing datasets.
As such, the method is a significant step towards real-world applications of AMPC.

We identified two significant implementation challenges.
First, despite multiple reinitializations of the NLP solver to find a global optimum, there remain outliers in the dataset due to local minima.
Second, accurate sensitivities require additional fine-tuning of solver settings and high numerical accuracy, thus prolonging dataset generation in practice.
Hence, future work should face these challenges by developing methods that cope with the non-uniqueness of MPC solutions and improve sensitivity calculation.
As the results suggest, parameter-adaptive AMPC enables practical applications for systems where classic AMPC would be too cumbersome to tune, \eg in robotics.
We plan to investigate such new applications and the generalization of our method to high dimensional systems in future research.

\acks{
This work is funded in part by the  German Research Foundation (DFG) – RTG 2236/2 (UnRAVeL) and the DFG priority program 1914 (grant TR 1433/1-2).
Simulations were performed with computing resources granted by RWTH Aachen University under project rwth1500.
We thank Andr{\'e}s Posada Moreno and Christian Fiedler for helpful discussions.
}

\bibliography{references}

\begin{thebibliography}{51}
\providecommand{\natexlab}[1]{#1}
\providecommand{\url}[1]{\texttt{#1}}
\expandafter\ifx\csname urlstyle\endcsname\relax
  \providecommand{\doi}[1]{doi: #1}\else
  \providecommand{\doi}{doi: \begingroup \urlstyle{rm}\Url}\fi

\bibitem[Abu-Ali et~al.(2022)Abu-Ali, Berkel, Manderla, Reimann, Kennel, and Abdelrahem]{abu2022deep}
Mohammad Abu-Ali, Felix Berkel, Maximilian Manderla, Sven Reimann, Ralph Kennel, and Mohamed Abdelrahem.
\newblock Deep learning-based long-horizon {MPC}: robust, high performing, and computationally efficient control for {PMSM} drives.
\newblock \emph{IEEE Transactions on Power Electronics}, 2022.

\bibitem[Adhau et~al.(2019)Adhau, Patil, Ingole, and Sonawane]{adhau2019embedded}
Saket Adhau, Sayli Patil, Deepak Ingole, and Dayaram Sonawane.
\newblock Embedded implementation of deep learning-based linear model predictive control.
\newblock In \emph{Sixth Indian control conference (ICC)}, 2019.

\bibitem[{\AA}kesson and Toivonen(2006)]{aakesson2006neural}
Bernt~M {\AA}kesson and Hannu~T Toivonen.
\newblock A neural network model predictive controller.
\newblock \emph{Journal of Process Control}, 2006.

\bibitem[Alessio and Bemporad(2009)]{alessio2009survey}
Alessandro Alessio and Alberto Bemporad.
\newblock A survey on explicit model predictive control.
\newblock \emph{Nonlinear Model Predictive Control: towards New Challenging Applications}, 2009.

\bibitem[Andersson and Rawlings(2018)]{andersson2018sensitivity}
Joel~AE Andersson and James~B Rawlings.
\newblock Sensitivity analysis for nonlinear programming in {CasADi}.
\newblock In \emph{IFAC-PapersOnLine, 6th IFAC Conference on Nonlinear Model Predictive Control {NMPC} 2018}, 2018.

\bibitem[Apkarian et~al.(2012)Apkarian, Lacheray, and Martin]{apkarian2012quanser}
Jacob Apkarian, Herve Lacheray, and Peter Martin.
\newblock Student workbook {IP02} base unit experiment for matlab/simulink users.
\newblock \emph{Quanser Inc.}, 2012.

\bibitem[Bayer et~al.(2016)Bayer, Brunner, Lazar, Wijnand, and Allg{\"o}wer]{bayer2016tube}
Florian~A Bayer, Florian~D Brunner, Mircea Lazar, Marc Wijnand, and Frank Allg{\"o}wer.
\newblock A tube-based approach to nonlinear explicit {MPC}.
\newblock In \emph{55th Conference on Decision and Control (CDC)}. IEEE, 2016.

\bibitem[Bemporad and Filippi(2006)]{bemporad2006algorithm}
Alberto Bemporad and Carlo Filippi.
\newblock An algorithm for approximate multiparametric convex programming.
\newblock \emph{Computational optimization and applications}, 2006.

\bibitem[Bemporad et~al.(2002)Bemporad, Morari, Dua, and Pistikopoulos]{bemporad2002explicit}
Alberto Bemporad, Manfred Morari, Vivek Dua, and Efstratios~N Pistikopoulos.
\newblock The explicit linear quadratic regulator for constrained systems.
\newblock \emph{Automatica}, 2002.

\bibitem[Bogdanovic et~al.(2022)Bogdanovic, Khadiv, and Righetti]{bogdanovic2022model}
Miroslav Bogdanovic, Majid Khadiv, and Ludovic Righetti.
\newblock Model-free reinforcement learning for robust locomotion using demonstrations from trajectory optimization.
\newblock \emph{Frontiers in Robotics and AI}, 2022.

\bibitem[Boubaker(2013)]{boubaker2013inverted}
Olfa Boubaker.
\newblock The inverted pendulum benchmark in nonlinear control theory: a survey.
\newblock \emph{International Journal of Advanced Robotic Systems}, 2013.

\bibitem[Bradbury et~al.(2018)Bradbury, Frostig, Hawkins, Johnson, Leary, Maclaurin, Necula, Paszke, Vander{P}las, Wanderman-{M}ilne, and Zhang]{jax2018github}
James Bradbury, Roy Frostig, Peter Hawkins, Matthew~James Johnson, Chris Leary, Dougal Maclaurin, George Necula, Adam Paszke, Jake Vander{P}las, Skye Wanderman-{M}ilne, and Qiao Zhang.
\newblock {JAX}: composable transformations of {P}ython+{N}um{P}y programs, 2018.
\newblock URL \url{http://github.com/google/jax}.

\bibitem[B{\"u}skens and Maurer(2001)]{buskens2001sensitivity}
Christof B{\"u}skens and Helmut Maurer.
\newblock Sensitivity analysis and real-time optimization of parametric nonlinear programming problems.
\newblock In \emph{Online Optimization of Large Scale Systems}. 2001.

\bibitem[Bynum et~al.(2021)Bynum, Hackebeil, Hart, Laird, Nicholson, Siirola, Watson, and Woodruff]{bynum2021pyomo}
Michael~L. Bynum, Gabriel~A. Hackebeil, William~E. Hart, Carl~D. Laird, Bethany~L. Nicholson, John~D. Siirola, Jean-Paul Watson, and David~L. Woodruff.
\newblock \emph{Pyomo--optimization modeling in {Python}}.
\newblock Springer Science \& Business Media, 2021.

\bibitem[Carius et~al.(2020)Carius, Farshidian, and Hutter]{carius2020mpc}
Jan Carius, Farbod Farshidian, and Marco Hutter.
\newblock {MPC-Net}: A first principles guided policy search.
\newblock \emph{IEEE Robotics and Automation Letters}, 2020.

\bibitem[Chan et~al.(2021)Chan, Paulson, and Mesbah]{chan2021deep}
Kimberly~J Chan, Joel~A Paulson, and Ali Mesbah.
\newblock Deep learning-based approximate nonlinear model predictive control with offset-free tracking for embedded applications.
\newblock \emph{American Control Conference (ACC)}, 2021.

\bibitem[Chen et~al.(2022)Chen, Wang, Atanasov, Kumar, and Morari]{chen2022large}
Steven~W Chen, Tianyu Wang, Nikolay Atanasov, Vijay Kumar, and Manfred Morari.
\newblock Large scale model predictive control with neural networks and primal active sets.
\newblock \emph{Automatica}, 2022.

\bibitem[Diehl et~al.(2002)Diehl, Bock, Schl{\"o}der, Findeisen, Nagy, and Allg{\"o}wer]{diehl2002real}
Moritz Diehl, H~Georg Bock, Johannes~P Schl{\"o}der, Rolf Findeisen, Zoltan Nagy, and Frank Allg{\"o}wer.
\newblock Real-time optimization and nonlinear model predictive control of processes governed by differential-algebraic equations.
\newblock \emph{Journal of Process Control}, 2002.

\bibitem[Fiacco(1976)]{fiacco1976sensitivity}
Anthony~V Fiacco.
\newblock Sensitivity analysis for nonlinear programming using penalty methods.
\newblock \emph{Mathematical programming}, 1976.

\bibitem[Florence et~al.(2022)Florence, Lynch, Zeng, Ramirez, Wahid, Downs, Wong, Lee, Mordatch, and Tompson]{florence2022implicit}
Pete Florence, Corey Lynch, Andy Zeng, Oscar~A Ramirez, Ayzaan Wahid, Laura Downs, Adrian Wong, Johnny Lee, Igor Mordatch, and Jonathan Tompson.
\newblock Implicit behavioral cloning.
\newblock In \emph{Proceedings of the 5th Conference on Robot Learning}, 2022.

\bibitem[Forgione et~al.(2020)Forgione, Piga, and Bemporad]{forgione2020efficient}
Marco Forgione, Dario Piga, and Alberto Bemporad.
\newblock Efficient calibration of embedded mpc.
\newblock \emph{IFAC-PapersOnLine, 21st IFAC World Congress}, 53\penalty0 (2):\penalty0 5189--5194, 2020.

\bibitem[Gonzalez et~al.(2023)Gonzalez, Asadi, Kooijman, and Lim]{gonzalez2023neural}
Camilo Gonzalez, Houshyar Asadi, Lars Kooijman, and Chee~Peng Lim.
\newblock Neural networks for fast optimisation in model predictive control: a review.
\newblock \emph{arXiv preprint arXiv:2309.02668}, 2023.

\bibitem[Hertneck et~al.(2018)Hertneck, K{\"o}hler, Trimpe, and Allg{\"o}wer]{hertneck2018ampc}
Michael Hertneck, Johannes K{\"o}hler, Sebastian Trimpe, and Frank Allg{\"o}wer.
\newblock Learning an approximate model predictive controller with guarantees.
\newblock \emph{IEEE Control Systems Letters}, 2018.

\bibitem[Houska and Villanueva(2019)]{houska2019robust}
Boris Houska and Mario~E Villanueva.
\newblock Robust optimization for {MPC}.
\newblock \emph{Handbook of model predictive control}, 2019.

\bibitem[J{\"a}schke et~al.(2014)J{\"a}schke, Yang, and Biegler]{jaschke2014fast}
Johannes J{\"a}schke, Xue Yang, and Lorenz~T Biegler.
\newblock Fast economic model predictive control based on {NLP}-sensitivities.
\newblock \emph{Journal of Process Control}, 2014.

\bibitem[Johansen(2004)]{johansen2004approximate}
Tor~A Johansen.
\newblock Approximate explicit receding horizon control of constrained nonlinear systems.
\newblock \emph{Automatica}, 2004.

\bibitem[Kadam and Marquardt(2004)]{kadam2004sensitivity}
Jitendra~V Kadam and Wolfgang Marquardt.
\newblock Sensitivity-based solution updates in closed-loop dynamic optimization.
\newblock In \emph{IFAC Proceedings Volumes, 7th IFAC Symposium on Dynamics and Control of Process Systems 2004}, 2004.

\bibitem[Karg and Lucia(2019)]{karg2019learning}
Benjamin Karg and Sergio Lucia.
\newblock Learning-based approximation of robust nonlinear predictive control with state estimation applied to a towing kite.
\newblock In \emph{18th European Control Conference (ECC)}, 2019.

\bibitem[Klau{\v{c}}o et~al.(2019)Klau{\v{c}}o, Kal{\'u}z, and Kvasnica]{klauvco2019machine}
Martin Klau{\v{c}}o, Martin Kal{\'u}z, and Michal Kvasnica.
\newblock Machine learning-based warm starting of active set methods in embedded model predictive control.
\newblock \emph{Engineering Applications of Artificial Intelligence}, 2019.

\bibitem[K{\"o}hler et~al.(2020)K{\"o}hler, Soloperto, M{\"u}ller, and Allg{\"o}wer]{kohler2020computationally}
Johannes K{\"o}hler, Raffaele Soloperto, Matthias~A M{\"u}ller, and Frank Allg{\"o}wer.
\newblock A computationally efficient robust model predictive control framework for uncertain nonlinear systems.
\newblock \emph{IEEE Transactions on Automatic Control}, 2020.

\bibitem[Krishnamoorthy(2021)]{krishnamoorthy2021sensitivity}
Dinesh Krishnamoorthy.
\newblock A sensitivity-based data augmentation framework for model predictive control policy approximation.
\newblock \emph{IEEE Transactions on Automatic Control}, 2021.

\bibitem[Krishnamoorthy(2023)]{krishnamoorthy2023improved}
Dinesh Krishnamoorthy.
\newblock An improved data augmentation scheme for model predictive control policy approximation.
\newblock \emph{IEEE Control Systems Letters}, 2023.

\bibitem[Kvasnica et~al.(2015)Kvasnica, Holaza, Tak{\'a}cs, and Ingole]{kvasnica2015design}
Michal Kvasnica, Juraj Holaza, B{\'a}lint Tak{\'a}cs, and Deepak Ingole.
\newblock Design and verification of low-complexity explicit {MPC} controllers in {MPT3}.
\newblock In \emph{European Control Conference (ECC)}, 2015.

\bibitem[Leonow et~al.(2023)Leonow, Dyrska, and M{\"o}nnigmann]{leonow2023embedded}
Sebastian Leonow, Raphael Dyrska, and Martin M{\"o}nnigmann.
\newblock Embedded implementation of a neural network emulating nonlinear {MPC} in a process control application.
\newblock 2023.

\bibitem[Li et~al.(2022)Li, Hua, and Cao]{li2022using}
Yun Li, Kaixun Hua, and Yankai Cao.
\newblock Using stochastic programming to train neural network approximation of nonlinear {MPC} laws.
\newblock \emph{Automatica}, 2022.

\bibitem[Lucia et~al.(2020)Lucia, Navarro, Karg, Sarnago, and Lucia]{lucia2020deep}
Sergio Lucia, Denis Navarro, Benjamin Karg, Hector Sarnago, and Oscar Lucia.
\newblock Deep learning-based model predictive control for resonant power converters.
\newblock \emph{IEEE Transactions on Industrial Informatics}, 2020.

\bibitem[L{\"u}ken et~al.(2023)L{\"u}ken, Brandner, and Lucia]{lueken2023sobolev}
Lukas L{\"u}ken, Dean Brandner, and Sergio Lucia.
\newblock Sobolev training for data-dfficient approximate nonlinear {MPC}.
\newblock In \emph{IFAC Proceedings Volumes, 22nd IFAC World Congress}, 2023.

\bibitem[Mager et~al.(2022)Mager, Baumann, Herrmann, Trimpe, and Zimmerling]{mager2022scaling}
Fabian Mager, Dominik Baumann, Carsten Herrmann, Sebastian Trimpe, and Marco Zimmerling.
\newblock Scaling beyond bandwidth limitations: wireless control with stability guarantees under overload.
\newblock \emph{ACM Transactions on Cyber-Physical Systems (TCPS)}, 2022.

\bibitem[Magni et~al.(2002)Magni, Scattolini, and {\AA}str{\"o}m]{magni2002global}
Lalo Magni, Riccardo Scattolini, and Karl~Johan {\AA}str{\"o}m.
\newblock Global stabilization of the inverted pendulum using model predictive control.
\newblock In \emph{IFAC Proceedings Volumes, 15th IFAC World Congress}, 2002.

\bibitem[Nubert et~al.(2020)Nubert, K{\"o}hler, Berenz, Allg{\"o}wer, and Trimpe]{nubert2020safe}
Julian Nubert, Johannes K{\"o}hler, Vincent Berenz, Frank Allg{\"o}wer, and Sebastian Trimpe.
\newblock Safe and fast tracking on a robot manipulator: robust {MPC} and neural network control.
\newblock \emph{IEEE Robotics and Automation Letters}, 2020.

\bibitem[Parisini et~al.(1998)Parisini, Sanguineti, and Zoppoli]{parisini1998nonlinear}
Thomas Parisini, M~Sanguineti, and R~Zoppoli.
\newblock Nonlinear stabilization by receding-horizon neural regulators.
\newblock \emph{International Journal of Control}, 1998.

\bibitem[Paulson et~al.(2023)Paulson, Sorourifar, and Mesbah]{paulson2023tutorial}
Joel~A Paulson, Farshud Sorourifar, and Ali Mesbah.
\newblock A tutorial on derivative-free policy learning methods for interpretable controller representations.
\newblock In \emph{2023 American Control Conference (ACC)}, pages 1295--1306. IEEE, 2023.

\bibitem[Pirnay et~al.(2011)Pirnay, L{\'o}pez-Negrete, and Biegler]{pirnay2011sipopt}
Hans Pirnay, Rodrigo L{\'o}pez-Negrete, and Lorenz~T Biegler.
\newblock {sIPOPT} reference manual.
\newblock Carnegie Mellon University, 2011.

\bibitem[Rawlings et~al.(2017)Rawlings, Mayne, and Diehl]{rawlings2017model}
James~Blake Rawlings, David~Q Mayne, and Moritz Diehl.
\newblock \emph{Model predictive control: theory, computation, and design}.
\newblock Nob Hill Publishing Madison, WI, 2017.

\bibitem[Vaupel et~al.(2020)Vaupel, Hamacher, Caspari, Mhamdi, Kevrekidis, and Mitsos]{vaupel2020accelerating}
Yannic Vaupel, Nils~C Hamacher, Adrian Caspari, Adel Mhamdi, Ioannis~G Kevrekidis, and Alexander Mitsos.
\newblock Accelerating nonlinear model predictive control through machine learning.
\newblock \emph{Journal of process control}, 2020.

\bibitem[Verschueren et~al.(2022)Verschueren, Frison, Kouzoupis, Frey, Duijkeren, Zanelli, Novoselnik, Albin, Quirynen, and Diehl]{verschueren2022acados}
Robin Verschueren, Gianluca Frison, Dimitris Kouzoupis, Jonathan Frey, Niels~van Duijkeren, Andrea Zanelli, Branimir Novoselnik, Thivaharan Albin, Rien Quirynen, and Moritz Diehl.
\newblock acados—a modular open-source framework for fast embedded optimal control.
\newblock \emph{Mathematical Programming Computation}, 2022.

\bibitem[W{\"a}chter and Biegler(2006)]{wachter2006implementation}
Andreas W{\"a}chter and Lorenz~T Biegler.
\newblock On the implementation of an interior-point filter line-search algorithm for large-scale nonlinear programming.
\newblock \emph{Mathematical programming}, 2006.

\bibitem[Wang et~al.(2021)Wang, Shen, Yin, Tang, Liu, Zhang, Wang, Rodriguez, and Norambuena]{wang2021model}
Daming Wang, Zheng~John Shen, Xin Yin, Sai Tang, Xifei Liu, Chao Zhang, Jun Wang, Jose Rodriguez, and Margarita Norambuena.
\newblock Model predictive control using artificial neural network for power converters.
\newblock \emph{IEEE Transactions on Industrial Electronics}, 2021.

\bibitem[Xiang et~al.(2024)Xiang, Chung, and Lin]{xiang2024use}
Yangxiao Xiang, Henry Shu-Hung Chung, and Hongjian Lin.
\newblock On the use of dualrelu ann for approximating explicit model predictive control for buck converters.
\newblock In \emph{2024 IEEE Applied Power Electronics Conference and Exposition (APEC)}, pages 2822--2827. IEEE, 2024.

\bibitem[Yang and Biegler(2013)]{yang2013advancedmultistepmpc}
Xue Yang and Lorenz~T Biegler.
\newblock Advanced-multi-step nonlinear model predictive control.
\newblock \emph{Journal of process control}, 2013.

\bibitem[Zavala and Biegler(2009)]{zavala2009advancedstepmpc}
Victor~M Zavala and Lorenz~T Biegler.
\newblock The advanced-step {NMPC} controller: optimality, stability and robustness.
\newblock \emph{Automatica}, 2009.

\end{thebibliography}

\end{document}